\documentclass[fontsize=12pt,a4paper,headings=normal,
twoside=false,leqno,parskip=half-,abstract=true]{scrartcl}
\usepackage[english]{babel}
\usepackage{bm}
\usepackage[utf8]{inputenc}
\usepackage{hyperref}
\hypersetup{
 pdftitle={Attractor Theorems in HL},
 pdfauthor={Phillipo Lappicy},
 colorlinks=true,
 linkcolor=blue,
 citecolor=blue,
 filecolor=blue,
 urlcolor=blue}

\usepackage{csquotes}
\usepackage{float}

\usepackage{graphicx}
\usepackage[format=plain,labelfont=bf,font=small]{caption}
\usepackage{subfigure}
\usepackage{xcolor}
\usepackage[arrow, matrix, curve]{xy}
\usepackage{float}

\usepackage{caption}
\usepackage{comment}
\usepackage[textwidth=2cm,textsize=small,backgroundcolor=none]{todonotes}

\usepackage{tabulary}
\usepackage{array}

\usepackage{amsmath,amsthm}
\usepackage{amssymb} 
\usepackage{latexsym}
\usepackage{mathtools}

\usepackage{tikz}
\usetikzlibrary{arrows, arrows.meta, calc, intersections, decorations.markings, shapes}

\usepackage{pgfplots, xcolor}
\pgfplotsset{compat=1.18}
\usepgfplotslibrary{colormaps}

\usepackage[format=plain,labelfont=bf,font=small]{caption}
\usepackage{float}

\usepackage{wasysym}

\newtheorem{thm}{Theorem}[section]
\newtheorem{lem}[thm]{Lemma}
\newtheorem{prop}[thm]{Proposition}

\newenvironment{pf}[1][Proof]{\begin{trivlist}
\item[\hskip \labelsep {\bfseries #1}]}{\end{trivlist}}

\usepackage[notref,notcite,color,final 
]{showkeys}

\DeclareOldFontCommand{\bf}{\normalfont\bfseries}{\mathbf}
\DeclareOldFontCommand{\it}{\normalfont\itshape}{\textit}

\newcommand{\RR}{\mathbb R}

\newcommand{\BIIa}{\mathcal{B}_{N_1}}
\newcommand{\BIIb}{\mathcal{B}_{N_2}}
\newcommand{\BIIc}{\mathcal{B}_{N_3}}

\newcommand{\BVIIa}{\mathcal{B}_{N_1N_2}}
\newcommand{\BVIIb}{\mathcal{B}_{N_1N_3}}
\newcommand{\BVIIc}{\mathcal{B}_{N_2N_3}}

\newcommand{\KC}{\mathrm{K}^{\ocircle}}

\renewcommand{\S}{\cal S}

\newcommand{\Q}{{\bf \mathrm{Q}}}
\newcommand{\T}{{ \mathrm{T}}}

\begin{document}

\title{
\LARGE{The Bianchi IX Attractor in Modified Gravity}
}

\author{
 \\
{~}\\
Ester Beatriz*, Everaldo Bonotto* and Phillipo Lappicy**\\
\vspace{2cm}}

\date{ }
\maketitle
\thispagestyle{empty}

\vfill

$\ast$\\
ICMC, Universidade de S\~ao Paulo, Brazil\\

$\ast$ $\ast$\\
Universidad Complutense de Madrid and ICMAT, Spain\\


\newpage
\pagestyle{plain}
\pagenumbering{arabic}
\setcounter{page}{1}

\begin{abstract}
\noindent
We consider vacuum anisotropic spatially homogeneous models in certain modified gravity theories (such as Ho{\v{r}}ava--Lifshitz, $\lambda$--$R$ or $f(R)$ gravity), which are expected to describe generic spacelike singularities for these theories. These models perturb the well-known Bianchi models in general relativity (GR) by a parameter $v\in (0,1)$ with GR recovered at $v=1/2$. 
We prove an analogue of the well-known Ringström attractor theorem in GR to the supercritical theories: for any $v\in (1/2,1)$, \emph{all} solutions of Bianchi type~$\mathrm{IX}$ converge to an analogue of the Mixmaster attractor, consisting of Bianchi type I solutions (Kasner states) and heteroclinic chains of Bianchi type II solutions.
In contrast to GR, there are no solutions that converge to a different set other than the Mixmaster (such as the locally rotationally symmetric solutions in GR).

\medskip \noindent \textbf{Keywords:} Ordinary Differential Equations; Dynamical Systems; Global Attractor; Cosmology; Spacelike Singularity; Modified Gravity.

\medskip \noindent \textbf{MSC 2020:} 34D45, 37G35, 37N20, 83C20, 83C75, 83D05.  
\end{abstract}


\section{Introduction}\label{sec:intro}

\numberwithin{equation}{section}
\numberwithin{figure}{section}
\numberwithin{table}{section}



The Belinski, Khalatnikov and Lifshitz (BKL) picture suggests that generic singularities in general relativity (GR) are: (i) \textit{vacuum dominated}, (ii) \textit{local} and (iii) \textit{oscillatory}.
In this regard, vacuum spatially homogeneous anisotropic cosmologies, the Bianchi models, are expected to play a key role in the dynamical asymptotic behaviour, see \cite{mis69a,bkl70,bkl82,Mixmaster,ugg13a,ugg13b,Coley}.
%
%
Similarly, the vacuum anisotropic spatially homogeneous Bianchi models derived in \cite{HellLappicyUggla} are expected to describe generic singularities in certain modified higher-order gravity theories, such as Ho{\v{r}}ava-Lifshitz,  $\lambda$-$R$ or $f(R)$-gravity, see \cite{hor09a,hor09b,Bakas10,Miso11,Kamenshchik,HL_status_report,fR,fR2,MG}. 
Indeed, it is heuristically argued that there is an `asymptotically dominant' curvature term in Hilbert--Einstein action toward the initial singularity, yielding a respective dominant Bianchi model.
%
More importantly, it is expected that such models provide new insights in GR.

The dominant vacuum Bianchi models in Ho{\v{r}}ava gravity are given by
\begin{subequations}\label{intro_dynsyslambdaR}
\begin{align}
\Sigma_\alpha^\prime &= 4v(1-\Sigma^2)\Sigma_\alpha + {\cal S}_\alpha,\\
N_\alpha^\prime &= -2(2v\Sigma^2 + \Sigma_\alpha)N_\alpha, \label{intro_dynsyslambdaR_N}
\end{align}
%
for $\alpha = 1,2,3$, with parameter $v\in (0,1)$, and the constraints,
\begin{align}
0 &= 1 - \Sigma^2 - \Omega_k, \label{intro_cons1}\\
0 &= \Sigma_1 + \Sigma_2 + \Sigma_3,\label{intro_cons2}
\end{align}
\end{subequations}
where
\begin{subequations}
\begin{align}
\Sigma^2 &:= \frac16\left(\Sigma_1^2 + \Sigma_2^2 + \Sigma_3^2\right),\label{Sigma}\\
\Omega_k &:= N_1^2 + N_2^2 + N_3^2 - 2N_1N_2 - 2N_2N_3 - 2N_3N_1, \label{Omega_k}\\
{\cal S}_\alpha &:= -4[(N_\beta - N_\gamma)^2 - N_\alpha(2N_\alpha - N_\beta - N_\gamma)].
\end{align}
\end{subequations}
Here $(\alpha\beta\gamma)$ is a permutation of $(123)$. Note that $(.){}^\prime = d/d\tau$ denotes
the derivative with respect to the chosen time variable, $\tau_-$, which is in the opposite direction of
physical time. Hence, for expanding models, $\tau_-\rightarrow\infty$
describes the dynamics toward the initial singularity.


The Bianchi type of a solution of system \eqref{intro_dynsyslambdaR} is determined by the signs of the $N$-variables, which are invariant under the flow. This provides a stratification of phase-space by invariant sets of increasing dimension.
Bianchi type $\mathrm{I}$ consists of all $N_\alpha$ being zero (constituting a circle of equilibria), Bianchi type $\mathrm{II}$ consists of a single non-zero $N_\alpha$ (yielding a hemisphere containing heteroclinic orbits between the equilibria of type $\mathrm{I}$), Bianchi types $\mathrm{VI}_0,\mathrm{VII}_0$ consist of two non-zero $N_\alpha$, and Bianchi types $\mathrm{VIII},\mathrm{IX}$ consist of three non-zero $N_\alpha$.
%
See Table~\ref{intro:classAmodels}.

\begin{table}[H]
\begin{center}
\begin{tabular}{|c|ccc|c|}
\hline Bianchi type & ${N}_\alpha$ &  ${N}_\beta$ & ${N}_\gamma$ & Dimension \\ \hline
$\mathrm{IX}$ & $+$& $+$& $+$ & 4\\
$\mathrm{VIII}$ & $-$& $+$& $+$ & 4\\
$\mathrm{VII}_0$ & $0$& $+$ & $+$ & 3\\
$\mathrm{VI}_0$ & $0$ & $-$ & $+$ & 3\\
II & $0$ & $0$ & $+$ & 2\\
I & $0$ & $0$ & $0$ & 1\\\hline
\end{tabular}
\caption{The Bianchi types are invariant subsets of phase-space characterized by the zeros and signs of the variables $(N_\alpha, N_\beta, N_\gamma)$,
where $(\alpha\beta\gamma)$ is a permutation of $(123)$. 
There are equivalent representations associated with an overall change of sign. 
} \label{intro:classAmodels}
\end{center}
\end{table}\vspace{-0.5cm}
Equations \eqref{intro_dynsyslambdaR} describe vacuum spatially homogeneous models in GR when $v=1/2$, and therefore $v\in (0,1)$ denotes perturbations of GR.
The dynamics of the ODEs \eqref{intro_dynsyslambdaR} in GR, when $v=1/2$, has been extensively considered in the literature. A major achievement is the attractor theorem by H. Ringström, which states that the $\omega$-limit set of generic solutions of Bianchi type IX is contained in the space of solutions of Bianchi types I and II, see \cite{Ringstrom, heiugg09b}.
Therefore, it is expected that the concatenation of heteroclinic orbits of Bianchi type II and the induced map of Bianchi type I (the so-called \textit{BKL map}) play a key role in the dynamics.
Stability of some concatenations can be found in \cite{Beguin,Liebscher,lieetal12,Bernhard,Dutilleul}. For an overview, see \cite{Mixmaster}.

In the case that $v\neq 1/2$, some of the features of GR persist.
In particular, the Bianchi hierarchy of invariant sets remains valid depending on whether the variables $N_\alpha$ are nonzero. Bianchi type $\mathrm{I}$  consists of all $N_\alpha$ being zero, Bianchi type II consists of a single non-zero $N_\alpha$, Bianchi types $\mathrm{VI}_0,\mathrm{VII}_0$ consist of two non-zero $N_\alpha$, and Bianchi types $\mathrm{VIII},\mathrm{IX}$ consist of three non-zero $N_\alpha$.
The authors in \cite{HellLappicyUggla} describe the dynamics of the types $\mathrm{I},\mathrm{II},\mathrm{VI}_0,\mathrm{VII}_0$ for all $v\in (0,1)$. In particular, it is shown that in GR, $v=1/2$, is a bifurcation where chaos becomes generic; see also \cite{LappicyDaniel}. A counter-example of the BKL conjecture was found when $v=0$, where there are periodic orbits that are far from the Mixmaster attractor, see \cite{LappicyLessard}.

We prove that in the Bianchi type IX supercritical case, $v \in (1/2,1)$, the global attractor, ${\cal A}_-$, consists of Bianchi types $\mathrm{I}$ and $\mathrm{II}$. 
Our main result is stated as follows. 
\begin{thm}
\label{mainthm}
For the supercritical case, $v \in (1/2,1)$,
all Bianchi type IX solutions 
converge (as $\tau_-\rightarrow \infty$) to the Bianchi types $\mathrm{I}$ and $\mathrm{II}$ subsets.
\end{thm}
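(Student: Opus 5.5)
The approach is the Ringström strategy, adapted to $v\in(1/2,1)$: show that the products $N_\alpha N_\beta$ tend to zero along every type IX orbit. Since $\Omega_k=1-\Sigma^2\ge 0$ is bounded, all $\Sigma_\alpha$ are bounded. The type IX condition $N_1,N_2,N_3>0$ together with $\Omega_k=1-\Sigma^2\le 1$ then keeps all $N_\alpha$ bounded, except near the locally rotationally symmetric directions, where for example $N_2\approx N_3$ may be large. Convergence to types I and II is equivalent to
\begin{equation*}
N_1N_2+N_2N_3+N_3N_1\to 0 .
\end{equation*}
Indeed, once this holds, $\Omega_k$ is dominated by a single $N_\alpha^2$, and the $\omega$-limit set lies in $\{\text{at most one } N_\alpha\neq 0\}$.

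The first step is to find a monotone quantity. From \eqref{intro_dynsyslambdaR_N} and \eqref{intro_cons2},
\begin{equation*}
(N_1N_2N_3)'=-12v\Sigma^2\,N_1N_2N_3 .
\end{equation*}
So $\Delta:=(N_1N_2N_3)^{1/3}$ is nonincreasing, and strictly decreasing away from the isotropic set $\Sigma=0$. In Bianchi IX, $\Sigma=0$ forces $\Omega_k=1$; this set is not invariant, since there $\Sigma_\alpha'=\mathcal S_\alpha$, which is nonzero unless $N_1=N_2=N_3$, and that in turn gives $\Omega_k=-3N^2<0$, a contradiction. A LaSalle argument, applied on the compactified region where $\Delta$ stays bounded, then gives $\Delta\to 0$. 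I would still check whether, for $v\neq 1/2$, some combination of $\Delta$ with a power of $1-\Sigma^2$ gives a cleaner Lyapunov function, as in \cite{HellLappicyUggla}. The second step is to exclude the scenario in which two of the $N$'s, say $N_2,N_3$, stay comparable and of order one or larger while $N_1\to0$. This is the LRS/type $\mathrm{VII}_0$-like regime. In GR this regime is where non-generic LRS solutions escape to the Taub points. For $v>1/2$, I would use the results of \cite{HellLappicyUggla} on types $\mathrm{VI}_0$ and $\mathrm{VII}_0$, which describe the $\omega$-limits in those invariant sets: the supercritical $\mathrm{VII}_0$ solutions converge to type I/II. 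Continuity of the flow near the $\mathrm{VII}_0$ boundary and the monotonicity of $\Delta$ then show that a type IX orbit cannot accumulate on anything other than types I and II.

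The main obstacle is controlling the oscillatory regime where $N_2,N_3\to\infty$ with $N_2-N_3$ bounded, i.e.\ near the boundary at infinity of the $\mathrm{VII}_0$ subset. Here I would introduce averaged variables as in Ringström's analysis: $R=\sqrt{(N_2-N_3)^2+\dots}$-type polar coordinates, $M=\sqrt{N_2N_3}$, and a phase. I would average the fast rotation, whose frequency is proportional to $N_2+N_3$. The key computation is the averaged equation for the shear along the LRS axis. In GR ($v=1/2$) the averaged decay of $N_2N_3$ is only marginal, which lets LRS solutions approach the Taub points. I expect that for $v>1/2$ the extra factor $4v-2>0$ in $\Sigma_1'$ and in $(N_2N_3)'=-4(2v\Sigma^2-\Sigma_1/2)N_2N_3$ makes the averaged system strictly contracting toward $N_2N_3\to0$. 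This would also explain why no LRS exceptional set appears. The final step combines the three pieces: $\Delta\to0$, the exclusion of the oscillatory regime by averaging, and compactness of the remaining orbit. Together they yield that every $\omega$-limit set lies in the closure of types I and II, which is Theorem~\ref{mainthm}.
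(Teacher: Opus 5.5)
\textbf{The gap: excluding type $\mathrm{VII}_0$ points.} Your first step ($N_1N_2N_3$ decreasing, hence $\Delta\to0$, so the $\omega$-limit set lies in the closure of the type $\mathrm{VII}_0$ subsets) matches the paper. The genuine gap is your second step, which is where the theorem is actually decided. You argue from the fact that supercritical type $\mathrm{VII}_0$ orbits converge \emph{forward} to types I and II, together with continuity of the flow and monotonicity of $\Delta$. But $\Delta$ vanishes identically on $\mathrm{VII}_0$ and gives no information there. Forward information cannot exclude a type $\mathrm{VII}_0$ point from an $\omega$-limit set: a type IX orbit may accumulate on a $\mathrm{VII}_0$ orbit whose own forward limit is Kasner, e.g.\ by shadowing a heteroclinic chain through it. Indeed, for $v\in(0,1/2)$ every non-LRS type $\mathrm{VII}_0$ orbit also converges forward to $S_{\mathrm{VII}_0}$. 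Those orbits are bounded heteroclinics between Kasner points, and an $\omega$-limit set containing them is consistent with invariance and compactness. Your argument does not distinguish them from the supercritical orbits.

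The missing idea is to look \emph{backward} in time. An $\omega$-limit set is closed and invariant, so if it contains a type $\mathrm{VII}_0$ point it contains the whole orbit through it, including its backward tail. By Proposition~\ref{lem:genVIVIIalpha}(iii), for $v\in(1/2,1)$ every type $\mathrm{VII}_0$ orbit, including $\mathrm{LRS}^\pm$, has $N_+\to\infty$ as $\tau_-\to-\infty$. This contradicts compactness of the $\omega$-limit set. That is the paper's Lemma~\ref{lem:generic}, and it is exactly what separates $v>1/2$ from GR and from the subcritical case, where $\mathrm{VII}_0$ orbits are bounded backward.

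\textbf{Boundedness is not supplied.} That argument needs the $N_\alpha$ to be bounded for $\tau_-\ge0$; your LaSalle step also presupposes this. Your proposal does not deliver it: the averaging near the boundary at infinity of $\mathrm{VII}_0$ is only announced. The heuristic is sound. Averaging $\Sigma_-^2$ over the fast rotation on the circle $\Sigma_-^2+N_-^2\approx1-\Sigma_+^2$ gives $\langle 2v\Sigma^2+\Sigma_+\rangle\approx v\Sigma_+^2+\Sigma_++v$, which is positive for $v>1/2$, so large $N_2N_3$ decays. Carrying this out would be one way to substantiate the dissipativity claim of Lemma~\ref{lem:bdd}. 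However, it only controls the regime $N_2+N_3\gg1$, where the rotation is fast. It yields boundedness, not $N_2N_3\to0$, so you still need the backward-unboundedness step for the bounded part of $\mathrm{VII}_0$.

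\textbf{Two smaller points.} First, your opening claim $\Omega_k=1-\Sigma^2\ge0$ is false in type IX; you yourself note $\Omega_k=-3N^2$ when $N_1=N_2=N_3$. The correct bound is $\Sigma^2\le1+\Delta\le1+\Delta(0)$, with $\Delta=3(N_1N_2N_3)^{2/3}$, and it follows from $\Omega_k\ge-\Delta$ via Schur's inequality. Second, the $\mathcal{LRS}_\alpha$ solutions need no averaging: their reduced planar flow converges to $\mathrm{T}_1$ for $v>1/2$ (Lemma~\ref{lem:LRS3d}).
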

We emphasize that there is no meagre set that does not converge to Bianchi types $\mathrm{I}$ and $\mathrm{II}$, as the locally rotationally symmetric set when $v=1/2$.
Note that we do not settle \cite[Conjecture 7.4]{HellLappicyUggla} that claims that there is a stable set $S$ of Bianchi type $\mathrm{I}$ solutions which is the generic attractor.
Indeed, we expect that generic solutions do not oscillate following types $\mathrm{I}$ and $\mathrm{II}$ solutions, 
since infinite heteroclinic chains of types $\mathrm{I}$ and $\mathrm{II}$ that do not converge to the stable subset $S$ form to a Cantor set, which should only possess a codimension 1 (non-generic) stable manifold; see~\cite{HellLappicyUggla}. 
However, we cannot prove such a result using the current methods. 

\newpage

The remainder of this paper is organized as follows. 
Section \ref{sec:hier} explicitly describes the hierarchy of invariant sets of the Bianchi models by their increasing dimension. 
Section \ref{sec:conc} presents concluding remarks and open problems.


\section{Bianchi Hierarchy}\label{sec:hier}

We describe the stratification of invariant sets by increasing dimension; see Table~\ref{intro:classAmodels}.
We rewrite~\eqref{intro_dynsyslambdaR} using Misner variables
	$\Sigma_1 = -2  \Sigma_+,
	\Sigma_2 =  \Sigma_+ + \sqrt{3} \Sigma_-,
	\Sigma_3 =  \Sigma_+ - \sqrt{3} \Sigma_-$:
\begin{subequations}\label{full:subs}
	\begin{align}
	\Sigma_+^\prime  &= 4v(1-\Sigma^2)\Sigma_+ + \, {\cal S}_+, \label{full:Sigma+}\\
	\Sigma_-^\prime   &= 4v(1-\Sigma^2)\Sigma_- + \, {\cal S}_-, \label{full:Sigma-}\\
	N_1^\prime 			  &= -2(2v\Sigma^2 - 2\Sigma_+) \, N_1, \label{full:N1} \\
	N_2^\prime 			  &= -2(2v\Sigma^2 + \Sigma_+ + \sqrt{3}\Sigma_-)N_2, \label{full:N2} \\
	N_3^\prime 			  &= -2(2v\Sigma^2 + \Sigma_+ - \sqrt{3}\Sigma_- )N_3, \label{full:N3} 
	\end{align}
\end{subequations}
bound to the following constraint:
\begin{equation} \label{constraint}
1 =  \Sigma^2 + N_1^2 + N_2^2 + N_3^2 - 2N_1N_2 - 2N_2N_3 - 2N_3N_1,
\end{equation}
which restricts the phase-space $\RR^5$ to a four-dimensional invariant submanifold, where 
\begin{subequations}
	\begin{align}
	\Sigma^2  &	:= \Sigma_+^2 + \Sigma_-^2 ,\label{Sigma2}\\
	\S_+ &  := 2 \left[ \left(N_3-N_2\right)^2 - N_1\left(2N_1-N_2-N_3\right) \right],\\
	\S_- &  := 2\sqrt{3} \left(N_3-N_2\right)\left(N_1-N_2-N_3\right).
	\end{align}
\end{subequations}

\subsection{Bianchi Type I}

Bianchi type I solutions are characterized by all $N_\alpha = 0,\alpha=1,2,3$.  The constraint \eqref{constraint} reduces the phase-space to the  \textit{Kasner circle} of equilibria:
\begin{align} \label{defKC}
\KC := \left\{ \left(\Sigma_+,\Sigma_-, 0,0,0\right) \in \RR^5 \,\, |  \,\,
\Sigma_+^2 + \Sigma_-^2 = 1 \right\}.
\end{align}
There are three special points in $\KC$ corresponding to the Taub representation of Minkowski spacetime in GR. They are therefore called the \emph{Taub points} and given by
\begin{align}\label{Taubpoints}
\T_1 := \left(-1,0\right), \hspace{1.2cm}
\T_2 := \left(\dfrac{1}{2},\frac{\sqrt 3}{2}\right), \hspace{1.2cm} \T_3 := \left(\dfrac{1}{2},-\dfrac{\sqrt 3}{2}\right).
\end{align}
Note the existence of the Kasner circle is independent on the parameter $v\in [0,1]$. Its stability, however, depends strongly on the parameter and this affects the type II solutions, which will be described next. 

In general, linearization of equation~\eqref{full:subs} at $\mathrm{K}^{\ocircle}$ results in $N'_1 = 2(\Sigma_+|_{\mathrm{K}^{\ocircle}}-v){N}_1$, and thereby the stability behaviour of $N_{1}$ changes when $\Sigma_+|_{\mathrm{K}^{\ocircle}} = v$. We define the \emph{unstable Kasner arc}, denoted by $\mathrm{int}(A_{1})$, to be the points in $\mathrm{K}^{\ocircle}$ that are unstable in the $N_1$ variable, i.e., when $\Sigma_+ > v$. The closure of $\mathrm{int}(A_{1})$ is denoted by
$A_1$ and is given by
\begin{equation}\label{A_1}
A_1:= \left\{ (\Sigma_+,\Sigma_-,0,0,0)\in \mathrm{K}^{\ocircle} \text{ $|$ }
\Sigma_+ \geq v\right\}.
\end{equation}
Note that the arcs $A_1$ are symmetric portions of
$\mathrm{K}^{\ocircle}$ with points $\mathrm{Q}_1 := -\mathrm{T}_1$ in the middle. 
The boundary set $\partial A_1$ consists of two fixed points, which coincide with the Taub points $\mathrm{T}_2$ and $\mathrm{T}_3$ when $v=1/2$, but unfolds each Taub point into two non-hyperbolic fixed points when $v\neq 1/2$.
Equivariance yields the arcs $A_2,A_3$, where the respective variables $N_2,N_3$ are unstable, and leads to similar statements for $\mathrm{T}_2$ and $\mathrm{T}_3$.
%
Note that the union of the three arcs $A_\alpha$ do not cover $\mathrm{K}^{\ocircle}$. For $v>1/2$, there is a closed region around the Taub points $\mathrm{T}_\alpha$ which is stable, defined by
\begin{equation}\label{setS}
    S:= \mathrm{K}^{\ocircle} \backslash \mathrm{int}(A_1) \cup \mathrm{int}(A_2) \cup \mathrm{int}(A_3).
\end{equation}
Moreover, the set $S$ consists of a curve of fixed points, each of which has a three-dimensional stable manifold that foliates its local neighborhood.

\subsection{Bianchi Type II}

Bianchi type II solutions consist of three disjoint hemispheres with a common boundary: the Kasner circle.
More precisely, it is the set of solutions where two $N$-variables are zero and one is nonzero, i.e. $\BIIa \cup \BIIb \cup \BIIc$, where  
\begin{align}\label{BII_N_1}
\BIIa :=\left\{ \left( \Sigma_+,\Sigma_-, N_1,0,0\right) \in \RR^5 \,\, \Big|  \begin{array}{c}
N_1 > 0  \\
{N_1}^2 = 1-\Sigma^2
\end{array} \right\},
\end{align}
and $\BIIb, \BIIc$ are obtained by equivariance with non-zero $N_2,N_3$, respectively.
Solutions of \eqref{full:subs} in the hemisphere $\BIIa$ are heteroclinic orbits between two Kasner equilibria with $\alpha$-limit sets in $\text{int}(A_1)$ and $\omega$-limit sets in the complement $A_1^c:=\mathrm{K}^{\ocircle} \backslash A_1$, as depicted in Figure \ref{KC_3d_map}. Similarly for $\BIIb, \BIIc$. For more details, see \cite{HellLappicyUggla}.
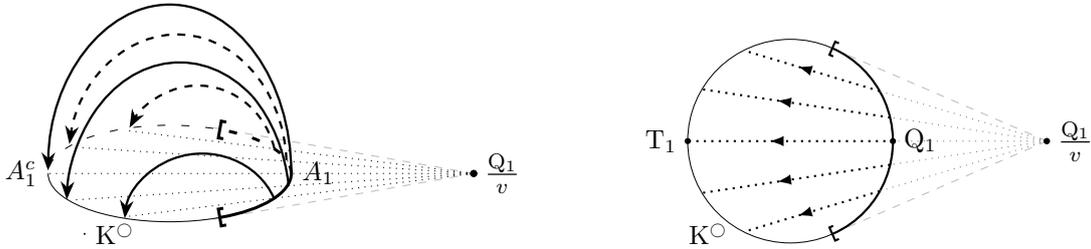
\begin{figure}[H]
	\centering
	\begin{tikzpicture}[scale=1.6]

 	\draw[lightgray,dashed,-] (2.5,0) -- (0.3,0.39);
 	\draw[lightgray,dashed,-] (2.5,0) -- (0.3,-0.39);

 	\draw [line width = 0.1mm] (-1,0) arc (180:360:1cm and 0.4cm);
 	\draw [line width = 0.1mm, loosely dashed] (1,0) arc (0:180:1cm and 0.4cm);
 	
 	\draw [-Bracket, line width = 0.4mm] (1,0) arc (0:-66.4218:1cm and 0.4cm);
 	\draw [-Bracket, line width = 0.4mm, loosely dashed] (1,0) arc (0:66.4218:1cm and 0.4cm);
 	\draw [black] (1,0) circle (0.01pt) node[anchor=west] {\footnotesize $A_1$};
  	\draw [black] (-1,0) circle (0.01pt) node[anchor=east] {\footnotesize $A_1^c$};

 	\filldraw [black] (-0.7,-0.5) circle (0.1pt) node[anchor=west] {\footnotesize $\KC$};

 	\draw [ -Stealth, line width = 0.3mm] (1,0) arc (0:179:1cm and 1.4cm);

	\draw [ -Stealth, line width = 0.3mm]  (0.86,-0.21) arc (22:174:0.64cm and 0.6cm);
	\draw [ -Stealth, line width = 0.3mm, dashed]  (0.86,0.21) arc (15:150:0.65cm and 0.7cm);

    \draw [line width = 0.3mm]  (0.68, 0.64) arc (48:8:0.93cm and 1.23cm);
    \draw [-Stealth,line width = 0.3mm]  (0.68, 0.64) arc (50:177:0.93cm and 1.2cm);
    \draw [-Stealth, line width = 0.3mm, dashed]  (0.96, 0.11) arc (3:170:0.90cm and 1.1cm);
 	
 	\draw[dotted] (2.5,0) -- (-1,0);
 	\draw[dotted] (2.5,0) -- (-0.8,0.22);
 	\draw[dotted] (2.5,0) -- (-0.8,-0.22);
	\draw[dotted] (2.5,0) -- (-0.35,0.356);
 	\draw[dotted] (2.5,0) -- (-0.35,-0.356);


%
%
%
%
%
%
%
%

 	\filldraw [black] (2.5,0) circle (0.7pt);
 	\draw [black] (2.5,0) circle (0.7pt) node[anchor=west] {$\frac{\Q_1}{v}$};
 	

	\end{tikzpicture}
	\qquad\quad
	\begin{tikzpicture}[scale=1.35, decoration={markings,mark=at position 0.6 with {\arrow{latex}}},>=latex]


	\draw[lightgray,dashed, -] (2.5,0) -- (0.41,-0.91);
	\draw[lightgray,dashed, -] (2.5,0) -- (0.41, 0.91);

	\draw [very thin] (1,0) arc (0:360:1cm and 1cm);	 	
	
	\draw [thick,Bracket-Bracket] (0.4,-0.9165) arc (-66.4218:66.4218:1cm and 1cm);
	
	\filldraw [black] (-1,0) circle (0.7pt) node[anchor=east] {\footnotesize $\T_1$};
	\filldraw [black] (1,0) circle (0.7pt) node[anchor=west] {\footnotesize $\Q_1$};
	
	\filldraw [black] (0.4,0.9165) circle (0pt);
	\filldraw [black] (0.4,-0.9165) circle (0pt);


\foreach \s in {0.15,0.34,0.5,0.66,0.85}
{\draw[gray,dotted] (2.5,0) -- ({cos(42-\s*84)},{sin(42-\s*84)});}
\draw[postaction={decorate},dotted,thick] ({cos(42-0.15*84)},{sin(42-0.15*84)}) -- ({cos(90+0.15*180)},{sin(90+0.15*180)});
\draw[postaction={decorate},dotted,thick] ({cos(42-0.5*84)},{sin(42-0.5*84)}) -- ({cos(90+0.5*180)},{sin(90+0.5*180)});
\draw[postaction={decorate},dotted,thick] ({cos(42-0.85*84)},{sin(42-0.85*84)}) -- ({cos(90+0.85*180)},{sin(90+0.85*180)});

\draw[postaction={decorate},dotted,thick] ({cos(42-0.34*84)},{sin(42-0.34*84)}) -- ({cos(90+0.33*180)},{sin(90+0.33*180)});
\draw[postaction={decorate},dotted,thick] ({cos(42-0.66*84)},{sin(42-0.66*84)}) -- ({cos(90+0.67*180)},{sin(90+0.67*180)});
%
%
%
%
	
	%
	\filldraw [black] (-0.8,-0.7) circle (0pt) node[anchor=north] {\footnotesize $\KC$}; 	
	
	\filldraw [black] (2.5,0) circle (0.7pt);
	\draw [black] (2.5,0) circle (0.7pt) node[anchor=west] {$\frac{\Q_1}{v}$};

	\end{tikzpicture}
	\caption {\textbf{Left:} Bianchi type II solutions are heteroclinic orbits in the hemisphere $\BIIa$ with $\alpha$-limit sets within int$(A_1)$ and $\omega$-limit sets in $A_1^c$. \textbf{Right:} Projection of the Bianchi type II solutions into the $\Sigma$-plane.}\label{KC_3d_map}
\end{figure}

\subsection{Bianchi types $\mathrm{VI}_0$ and $\mathrm{VII}_0$}

To obtain the equations for the types $\mathrm{VI}_0$ and $\mathrm{VII}_0$ models we set, without loss of generality, 
$N_1=0$, $N_2>0$, $N_3<0$ for type $\mathrm{VI}_0$,
and $N_1=0$, $N_2>0$, $N_3>0$ for type $\mathrm{VII}_0$. Note that there are equivalent invariant subspaces when the sign is reversed.
We denote the type $\mathrm{VII}_0$ invariant subsets by $\BVIIa, \BVIIb, \BVIIc$, which means that each $N_\alpha N_\beta \neq 0$ and possess the same sign (containing both cases of positive and negative signs).
The Bianchi types $\mathrm{VI}_0$ and $\mathrm{VII}_0$ share several features and their dynamical analysis is a consequence of three monotone quantities, see \cite{HellLappicyUggla} for proofs.

Note that the constraint (in both cases) becomes
\begin{equation}\label{constrVIVII}
1 - \Sigma^2 - (N_2 - N_3)^2 =0, \qquad \text{ where } \qquad \Sigma^2 := \Sigma_+^2 + \Sigma_-^2.
\end{equation}
Due to the constraint~\eqref{constrVIVII}, the state spaces for the
types $\mathrm{VI}_0$ and $\mathrm{VII}_0$ models with $N_1=0$ are
3-dimensional with a 2-dimensional boundary given by the union of the
invariant type $\mathrm{II}_2$, $\mathrm{II}_3$ and $\mathrm{K}^\ocircle$
sets. Type $\mathrm{VI}_0$ has a relatively compact state space,
whereas type $\mathrm{VII}_0$ has an unbounded one.
Indeed, equation~\eqref{constrVIVII} implies that $\Sigma_+^2 + \Sigma_-^2\leq 1$.
For type $\mathrm{VI}_0$, $(N_2-N_3)^2= N_2^2 + N_3^2 + 2|N_2N_3|$, and
hence~\eqref{constrVIVII} yields $N_2^2 \leq 1 -\Sigma^2$ and
$N_3^2 \leq 1 -\Sigma^2$, where the equalities hold individually for the $\mathrm{II}_2$
and $\mathrm{II}_3$ boundary sets, respectively. For type $\mathrm{VII}_0$,
on the other hand, introducing $N_\pm := N_2\pm N_3$ results in that the
constraint~\eqref{constrVIVII} can be written as $\Sigma^2 + N_-^2 = 1$,
and thus $\Sigma_\pm$ and $N_-$ are bounded, while $N_+$ is unbounded.

More similar features of both Bianchi types are, for example, $\Sigma_+=-1/(2v)$ is a 2-dimensional invariant subset in the
supercritical case, $v\in(1/2,1)$, both for types $\mathrm{VI}_0$
and $\mathrm{VII}_0$. They also have some
common asymptotic features. In particular, they have the same
$\mathrm{II}_2\cup\mathrm{II}_3\cup\mathrm{K}^\ocircle$ boundary.
In the supercritical case the stable set in the
Kasner circle set $\mathrm{K}^\ocircle$ is
given by $S_{\mathrm{VI}_0} = S_{\mathrm{VII}_0} =
S_{\mathrm{VI}_0,\mathrm{VII}_0} := \mathrm{K}^\ocircle\backslash
\mathrm{int}(A_2 \cup A_3)$, which, due to that $N_1=0$, is different than the
set $S$ in the supercritical Bianchi types $\mathrm{VIII}$ and $\mathrm{IX}$ models in~\eqref{setS},
see Figure~\ref{FIG:SVIVII}, although both
$S_{\mathrm{VI}_0,\mathrm{VII}_0}$ and $S$ are defined as the sets where
type II heteroclinic chains end. In the subcritical case, types
$\mathrm{VI}_0$ and $\mathrm{VII}_0$ also share the region $A_2\cap A_3$ in
$\mathrm{K}^\ocircle$, where both $N_2$ and $N_3$ are unstable in
$\mathrm{int}(A_2\cap A_3)$.
In the critical case, $A_2\cap A_3$ reduces to the Taub point $\mathrm{T}_1$.
These features are illustrated in Figure~\ref{FIG:SVIVII}.
\begin{figure}[H]
\minipage[t]{0.39\textwidth}\centering
        \boxed{\footnotesize{\text{$v < \tfrac12$}}}\\[0.25cm]
\begin{subfigure}\centering
    \begin{tikzpicture}[scale=1.15,yscale=-1,decoration={markings,mark=at position 0.6 with {\arrow{latex}}},>=latex]



    \draw[color=gray, dotted] (2.46,1.42) -- (-0.55,0.84);
    \draw[color=white, ultra thick] (0.15,0.97) -- (-0.55,0.84);
    \draw[dotted, thick, postaction={decorate}] (0.15,0.97) -- (-0.55,0.84);

    \draw[color=gray, dotted] (-2.46,1.42) -- (0.91,0.41);
    \draw[color=white, ultra thick] (-0.55,0.84) -- (0.91,0.41);
    \draw[dotted, thick, postaction={decorate}] (-0.55,0.84) -- (0.91,0.41);

    \draw[color=gray, dotted] (2.46,1.42) -- (-0.75,-0.69);
    \draw[color=white, ultra thick] (0.91,0.41) -- (-0.75,-0.69);
    \draw[dotted, thick, postaction={decorate}] (0.91,0.41) -- (-0.75,-0.69);

    \draw [domain=0:6.28,variable=\t,smooth] plot ({sin(\t r)},{cos(\t r)});

    \draw [ultra thick, dotted, white, domain=2.3:4,variable=\t,smooth] plot ({sin(\t r)},{cos(\t r)});

    \draw [very thick, domain=-0.29:0.29,variable=\t,smooth] plot ({0.985*sin(\t r)},{0.985*cos(\t r)});

    \draw[color=gray,rotate=120,dashed] (0,-2.85) -- (0.98,-0.29);
    \draw[color=gray,rotate=120,dashed] (0,-2.85) -- (-0.98,-0.29);

    \draw[color=gray,rotate=240,dashed] (0,-2.85) -- (0.98,-0.29);
    \draw[color=gray,rotate=240,dashed] (0,-2.85) -- (-0.98,-0.29);

    \draw[rotate=240] (0,-2.85) circle (0.1pt) node[anchor=north] {\scriptsize{$\mathrm{Q}_2/v$}};
    \draw[rotate=120]  (0,-2.85) circle (0.1pt) node[anchor=north] {\scriptsize{$\mathrm{Q}_3/v$}};

    \node at (0,1.2) {\scriptsize{$A_2\cap A_3$}};

    \node at (0,-1.2) {\scriptsize{$S_{\mathrm{VI}_0,\mathrm{VII}_0}$}};

\end{tikzpicture}
\end{subfigure}
\endminipage\hfill
\minipage[t]{0.3\textwidth}\centering
        \boxed{\footnotesize{\text{$v = \tfrac12$}}}\\[0.25cm]
\begin{subfigure}\centering
    \begin{tikzpicture}[scale=1.15,yscale=-1,decoration={markings,mark=at position 0.6 with {\arrow{latex}}},>=latex]

    \draw[color=gray,->](0,0) -- (0,-0.53)  node[anchor= south] {\scriptsize{$\Sigma_{+}$}};
    \draw[color=gray,->] (0,0) -- (0.53,0)  node[anchor= west] {\scriptsize{$\Sigma_{-}$}};
    
    \draw[color=gray, dotted] (1.75,1) -- (-0.65,0.76);
    \draw[color=white, ultra thick] (0.5,0.87) -- (-0.65,0.76);
    \draw[dotted, thick, postaction={decorate}] (0.5,0.87) -- (-0.65,0.76);

    \draw[color=gray, dotted] (-1.75,1) -- (0.91,0.41);
    \draw[color=white, ultra thick] (-0.65,0.76) -- (0.91,0.41);
    \draw[dotted, thick, postaction={decorate}] (-0.65,0.76) -- (0.91,0.41);

    \draw[color=gray, dotted] (1.75,1) -- (-0.67,-0.75);
    \draw[color=white, ultra thick] (0.91,0.405) -- (-0.67,-0.75);
    \draw[dotted, thick, postaction={decorate}] (0.91,0.405) -- (-0.67,-0.75);

    \draw [domain=0:6.28,variable=\t,smooth] plot ({sin(\t r)},{cos(\t r)});

    \draw [ultra thick, dotted, white, domain=2.1:4.2,variable=\t,smooth] plot ({sin(\t r)},{cos(\t r)});

    \draw[color=gray,dashed,-] (-1.75,1) -- (1.75,1);

    \draw[color=gray,dashed,-] (-1.75,1) -- (-0.88,-0.49);
    \draw[color=gray,dashed,-] (1.75,1) -- (0.88,-0.49);

    \draw[rotate=240] (0,-2) circle (0.1pt) node[anchor=north] {\scriptsize{$2\mathrm{Q}_2$}};
    \draw[rotate=120]  (0,-2) circle (0.1pt) node[anchor=north] {\scriptsize{$2\mathrm{Q}_3$}};

    \filldraw [black] (0,1) circle (1.25pt) node[anchor= north] {\scriptsize{$\mathrm{T}_1$}};
    \filldraw [black] (0.88,-0.49) circle (1.25pt) node[anchor= south west] {\scriptsize{$\mathrm{T}_2$}};
    \filldraw [black] (-0.88,-0.49) circle (1.25pt)node[anchor= south east] {\scriptsize{$\mathrm{T}_3$}};


    \node at (0,-1.2) {\scriptsize{$S_{\mathrm{VI}_0,\mathrm{VII}_0}$}};
\end{tikzpicture}
\end{subfigure}
\endminipage\hfill
\minipage[t]{0.3\textwidth}\centering
        \boxed{\footnotesize{\text{$v > \tfrac12$}}}\\[0.25cm]
\begin{subfigure}\centering
    \begin{tikzpicture}[scale=1.15,yscale=-1,decoration={markings,mark=at position 0.7 with {\arrow{latex}}},>=latex]

    \draw[color=gray,dotted] (-1.17,0.67) -- (1.17,0.67);

    \draw[white, ultra thick] (-0.74,0.67) -- (0.74,0.67);
    \draw[dotted, thick, postaction={decoration={markings,mark=at position 0.41 with {\arrow[thick,color=black]{latex reversed}}},decorate}] (-0.75,0.67) -- (0.75,0.67);

    \draw[rotate=-120,color=gray,dotted] (0,-1.35) -- (-1,0);
    \draw[rotate=-120,color=gray,dotted] (-1.17,0.67) -- (-0.86,-0.5);

    \draw[rotate=-120,white,ultra thick] (-0.29,-0.95) -- (-1,0);
    \draw[rotate=-120,dotted, thick, postaction={decorate}] (-0.29,-0.95) -- (-1,0);

    \draw[rotate=-120,white,ultra thick] (-1,0) -- (-0.86,-0.5);
    \draw[rotate=-120,dotted, thick, postaction={decorate}] (-1,0) -- (-0.86,-0.5);

    \draw[color=gray, dotted] (-1.17,0.67) -- (0.91,0.41);
    \draw[color=gray, dotted] (1.17,0.67) -- (-0.39,-0.92);

    \draw[color=white, ultra thick] (-0.78,0.62) -- (0.91,0.41);
    \draw[color=white, ultra thick] (0.91,0.405) -- (-0.39,-0.92);

    \draw[dotted, thick, postaction={decorate}] (-0.78,0.62) -- (0.91,0.41);
    \draw[dotted, thick, postaction={decorate}] (0.91,0.405) -- (-0.39,-0.92);

    \draw [line width=0.1pt,domain=0:6.28,variable=\t,smooth] plot ({sin(\t r)},{cos(\t r)});

    \draw [ultra thick, dotted, white, domain=-0.26:0.26,variable=\t,smooth] plot ({sin(\t r)},{cos(\t r)});
    \draw [ultra thick, dotted, white, domain=1.8:4.5,variable=\t,smooth] plot ({sin(\t r)},{cos(\t r)});


    \draw[color=gray,rotate=120,dashed,-] (0,-1.35) -- (0.685,-0.75);
    \draw[color=gray,rotate=120,dashed,-] (0,-1.35) -- (-0.685,-0.75);

    \draw[color=gray,rotate=-120,dashed,-] (0,-1.35) -- (0.685,-0.75);
    \draw[color=gray,rotate=-120,dashed,-] (0,-1.35) -- (-0.685,-0.75);

    \filldraw [rotate=-120-12 ] (0,-1) circle (0.6pt); 
    \filldraw [rotate=-120-108 ] (0,-1) circle (0.6pt); 


    \node at (0,1.2) {\scriptsize{$S_{\mathrm{VI}_0,\mathrm{VII}_0}$}};
    \node at (0,-1.2) {\scriptsize{$S_{\mathrm{VI}_0,\mathrm{VII}_0}$}};


    \draw[rotate=240] (0,-1.35) circle (0.1pt) node[anchor=east] {\scriptsize{$\mathrm{Q}_2/v$}};
    \draw[rotate=120]  (0,-1.35) circle (0.1pt) node[anchor=west] {\scriptsize{$\mathrm{Q}_3/v$}};

\end{tikzpicture}
\end{subfigure}
\endminipage
\vspace{-0.6cm}
\captionof{figure}{The common stable set $S_{\mathrm{VI}_0,\mathrm{VII}_0}$
for Bianchi types $\mathrm{VI}_0$ and $\mathrm{VII}_0$. In addition, projected
onto $(\Sigma_+,\Sigma_-)$-space, there are illustrative heteroclinic chains
located on the $\mathrm{II}_2\cup\mathrm{II}_3\cup\mathrm{K}^\ocircle$ boundary.
In particular, $v\in(1/2,1)$ admits a heteroclinic cycle/chain with period 2,
which resides on the projected line between $\mathrm{Q}_2/v$ and $\mathrm{Q}_3/v$
characterized by $\Sigma_+=-1/(2v)$.
}\label{FIG:SVIVII}
\end{figure}
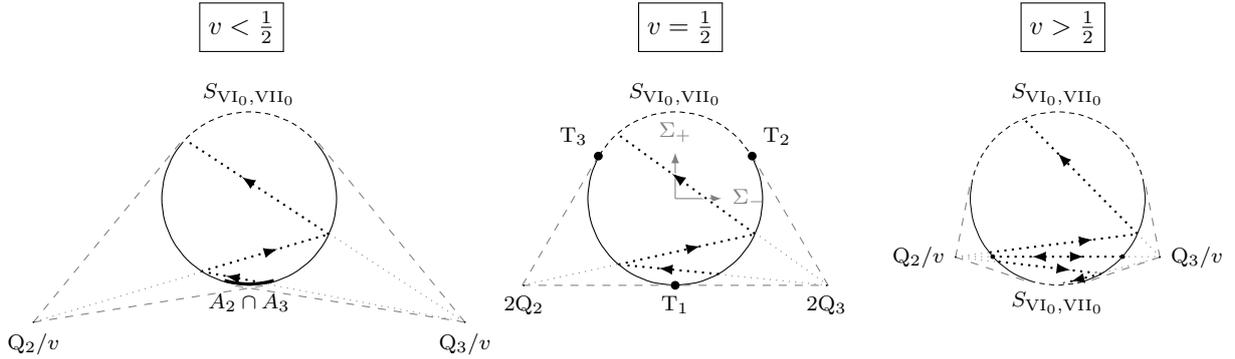
\begin{prop}\label{lem:genVIalpha}
In Bianchi type $\mathrm{VI}_0$ the limit sets (in $\tau_-$) are as follows:
\begin{itemize}
\item[(i)] When $v\in (0,1/2]$, the $\alpha$-limit set for all
orbits resides in the set $A_2\cap A_3\subseteq \mathrm{K}^\ocircle$,
where $A_2\cap A_3$ reduces to the Taub point $\mathrm{T}_1$ when $v=1/2$.
The $\omega$-limit set for all orbits resides in the set $S_{\mathrm{VI}_0}\subseteq \mathrm{K}^\ocircle$.
\item[(ii)] When $v\in (1/2,1)$, the $\alpha$-limit set for all orbits is
the fixed point $p_{\mathrm{VI}_0}$ given by
\begin{equation}\label{p_VI}
p_{\mathrm{VI}_0}:= \left\{ (\Sigma_+,\Sigma_-,N_2,N_3)=\left(-\frac{1}{2v},0, \frac{\sqrt{4v^2 - 1}}{4v},-\frac{\sqrt{4v^2 - 1}}{4v}\right) \right\}.
\end{equation}
%
The $\omega$-limit set of all orbits with $\Sigma_+\neq -1/(2v)$ resides in the set $S_{\mathrm{VI}_0}\subseteq \mathrm{K}^\ocircle$, whereas the $\omega$-limit set of all orbits on the invariant subset
$\Sigma_+ = - 1/(2v)$, apart from $p_{\mathrm{VI}_0}$, consists of the heteroclinic chain with period 2.
\end{itemize}
\end{prop}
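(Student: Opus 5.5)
The plan is to reduce everything to explicit monotone functions and then apply LaSalle-type arguments separately on each side of, and inside, the invariant plane $\Sigma_+=-1/(2v)$. With $N_1=0$ we have $\S_+=2(N_3-N_2)^2$, and the constraint \eqref{constrVIVII} gives $(N_3-N_2)^2=1-\Sigma^2$. Substituting into \eqref{full:Sigma+} yields
\begin{equation*}
Z:=1+2v\Sigma_+,\qquad Z^\prime = 4v(1-\Sigma^2)\,Z .
\end{equation*}
Hence $\{Z=0\}$ is invariant. Off this plane $|Z|$ is non-decreasing, and it is strictly increasing except on $\{\Sigma^2=1\}$. In type $\mathrm{VI}_0$ the two nonzero variables $N_2,N_3$ have opposite signs, so $\Sigma^2=1$ forces $N_2=N_3=0$; thus $\{\Sigma^2=1\}$ is exactly $\mathrm{K}^\ocircle$. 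The second monotone quantity combines $|N_2N_3|$ with $Z$. From \eqref{full:N2}--\eqref{full:N3} with $N_1=0$,
\begin{equation*}
(N_2N_3)^\prime=-4(2v\Sigma^2+\Sigma_+)N_2N_3 ,
\end{equation*}
so $\Delta:=|N_2N_3|/|Z|$ satisfies $\Delta^\prime=-4(v\Sigma^2+\Sigma_++v)\Delta$. Restricted to $\Sigma_-=0$, the quadratic $v\Sigma_+^2+\Sigma_++v$ has discriminant $1-4v^2$, which is negative for $v>1/2$. Since $\Sigma_-^2$ only adds a nonnegative term, $\Delta$ is strictly decreasing whenever $v>1/2$ and $Z\neq0$.

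For the $\omega$-limits I would argue as follows. When $v\le 1/2$, we have $Z>0$ on the whole disk $\Sigma^2\le 1$ because $-1/(2v)\le -1$. When $v>1/2$ and $\Sigma_+\ne-1/(2v)$, $Z$ has a fixed nonzero sign along the orbit. In both cases $|Z|$ is bounded and monotone, so LaSalle's invariance principle places $\omega$ in $\{Z^\prime=0\}\setminus\{Z=0\}$, i.e.\ in $\mathrm{K}^\ocircle$. It then remains to show that $\omega$ avoids $\mathrm{int}(A_2\cup A_3)$. Suppose $\omega$ contained a point of $\mathrm{int}(A_2)$. Near that point $N_2$ grows at a uniform exponential rate, so the orbit must leave a neighbourhood of the point infinitely often along the unstable $N_2$-direction. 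By continuity of the flow, $\omega$ would then contain a type $\mathrm{II}_2$ heteroclinic orbit, which contradicts $\omega\subseteq\mathrm{K}^\ocircle$. The same argument applies to $A_3$, so $\omega\subseteq S_{\mathrm{VI}_0}$.

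For the $\alpha$-limits when $v\le1/2$, $Z$ is decreasing backwards and bounded below by a positive constant, so $\alpha\subseteq\mathrm{K}^\ocircle$. The argument excluding $S_{\mathrm{VI}_0}$ mirrors the one above with time reversed, now using the stable directions, and shows $\alpha\subseteq A_2\cap A_3$. At $v=1/2$ this set is the single point $\mathrm{T}_1$. When $v>1/2$ I would first show $\alpha\cap\mathrm{K}^\ocircle=\emptyset$. As $\tau_-\to-\infty$, $\Delta$ increases. If $\alpha$ contained a Kasner point, where $|Z|$ is bounded away from zero (note $Z\neq0$ on $\mathrm{K}^\ocircle$ since $|\Sigma_+|\le1$ there only meets $\{Z=0\}$ in the interior for $v>1/2$), then $\Delta$ would be arbitrarily small along a backward sequence. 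This contradicts monotone growth from a positive value. LaSalle with $|Z|$ then puts $\alpha$ inside the invariant plane $\{Z=0\}$.

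The main obstacle is the two-dimensional analysis inside the plane $\{\Sigma_+=-1/(2v)\}$, which is needed both for the final assertion and to pin $\alpha$ down to $p_{\mathrm{VI}_0}$. On the plane the variables reduce to $(\Sigma_-,N_2,N_3)$ with one constraint. There $\Delta$ is undefined, but $|N_2N_3|$ itself satisfies $(|N_2N_3|)^\prime=-4(2v\Sigma^2-1/(2v))|N_2N_3|$. I would try to combine this with $\Sigma_-$ (for instance the quantity $\Sigma_-/(N_2+N_3)$, or a symmetric function of $N_2^2$ and $N_3^2$) to obtain a Lyapunov function whose only interior critical point is $p_{\mathrm{VI}_0}$ from \eqref{p_VI}. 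This would show that every other orbit in the plane has $\alpha=p_{\mathrm{VI}_0}$ and $\omega$ equal to the period-$2$ chain on the projected segment between $\mathrm{Q}_2/v$ and $\mathrm{Q}_3/v$. Given that, $\alpha$ for a general orbit is a compact invariant subset of the plane disjoint from $\mathrm{K}^\ocircle$. If it contained any point other than $p_{\mathrm{VI}_0}$, it would also contain that point's $\omega$-limit chain, which meets $\mathrm{K}^\ocircle$. This is a contradiction, so $\alpha=\{p_{\mathrm{VI}_0}\}$.
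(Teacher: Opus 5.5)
Your setup is right: $Z'=4v(1-\Sigma^2)Z$ and $\Delta'=-4(v\Sigma^2+\Sigma_++v)\Delta$. The $\omega$-limit arguments and case (i) go through; at $v=1/2$ the bound on $Z$ is not a positive constant, since $Z=0$ at $\mathrm{T}_1$, but LaSalle still gives $\alpha\subseteq\mathrm{K}^\ocircle$.

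\textbf{Main gap: the $\alpha$-limit claim in (ii).} Your assertion that $Z\neq0$ on $\mathrm{K}^\ocircle$ is false. For $v>1/2$ the chord $\Sigma_+=-1/(2v)$ meets the Kasner circle at $\mathrm{K}_\pm:=(-\tfrac{1}{2v},\pm r)$, with $r:=\sqrt{1-1/(4v^2)}$. These are exactly the Kasner points of the period-2 cycle; your final step even relies on that cycle meeting $\mathrm{K}^\ocircle$. At $\mathrm{K}_\pm$ the quotient $\Delta$ is of type $0/0$, so your argument only excludes Kasner points with $Z\neq0$. It does correctly give $Z\to0$ as $\tau_-\to-\infty$ and $\alpha\subseteq\{Z=0\}$. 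It does not give $\alpha\cap\mathrm{K}^\ocircle=\emptyset$, and your concluding contradiction assumes precisely that.

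This is not a formality. Inside the plane the cycle attracts, while transversally it repels. For a saddle periodic orbit with that structure there would be off-plane orbits backward-asymptotic to it. What rules this out here is that $N_2N_3\neq0$ off the type $\mathrm{II}$ boundary, and you need a monotone quantity that detects this. Write $2v\Sigma^2+\Sigma_+=2v\Sigma_-^2+Z(Z-1)/(2v)$.
\begin{itemize}
\item For $Z<0$, $|N_2N_3|$ itself is strictly decreasing, since $Z(Z-1)>0$.
\item For $Z>0$, take $F:=|N_2N_3|e^{\beta Z}$ with $\beta<-1/(v^2r^2)$. Then $F'/F=-8v\Sigma_-^2+2Z(1-Z)/v+4v\beta(1-\Sigma^2)Z<0$ in the slab $0<Z<\min\{1,2v^2r^2\}$. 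If $\Sigma_-^2\le r^2/2$, then $1-\Sigma^2\ge r^2/2$ and the $\beta$-term dominates; otherwise $-8v\Sigma_-^2\le-4vr^2$ dominates.
\end{itemize}
Since $Z\to0$ backward, $F$ (respectively $|N_2N_3|$) increases to a positive limit as $\tau_-\to-\infty$. Hence $\alpha$ avoids $\{N_2N_3=0\}$, and in particular the cycle. So $\alpha$ is a compact invariant subset of the plane with $N_2N_3\neq0$ on which $|N_2N_3|$ is constant. Once the planar analysis below is in place, this forces $\alpha=\{p_{\mathrm{VI}_0}\}$.

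\textbf{Second gap: the planar analysis is only announced.} The Lyapunov function you are looking for is already in your own formula. On $\{Z=0\}$ one has $2v\Sigma^2-1/(2v)=2v\Sigma_-^2$, so $(|N_2N_3|)'=-8v\Sigma_-^2|N_2N_3|\le0$. This is the kind of third monotone quantity the paper alludes to when deferring to \cite{HellLappicyUggla}.
\begin{itemize}
\item On $\{\Sigma_-=0\}$ one has $\Sigma_-'=2\sqrt3(N_2^2-N_3^2)$. So the largest invariant subset with $N_2N_3\neq0$ is $\{N_2=-N_3\}$, i.e. $p_{\mathrm{VI}_0}$.
\item $p_{\mathrm{VI}_0}$ is also the strict maximum, with value $r^2/4$, of $|N_2N_3|=\tfrac14[(N_2-N_3)^2-(N_2+N_3)^2]$ on the plane.
\item LaSalle therefore gives $\alpha=\{p_{\mathrm{VI}_0}\}$ and $\omega\subseteq\{N_2N_3=0\}$ for every other planar orbit. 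Thus $\omega$ lies in the cycle $\mathrm{K}_-\to\mathrm{K}_+\to\mathrm{K}_-$ formed by the $\mathrm{II}_2$ and $\mathrm{II}_3$ orbits on $\Sigma_+=-1/(2v)$.
\end{itemize}
To get the whole cycle you still need that $\mathrm{K}_\pm$ are hyperbolic saddles of the planar flow. At $\mathrm{K}_+$ the $N_2$- and $N_3$-eigenvalues are $-2(a+b)$ and $2(b-a)$, where $a=(4v^2-1)/(2v)$ and $b=\sqrt{3(4v^2-1)}/(2v)$, and $b>a$ because $v<1$. Their stable and unstable manifolds are the boundary orbits. Hence no interior orbit converges to $\mathrm{K}_\pm$, and an orbit accumulating on one boundary orbit must also accumulate on the next.
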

In Proposition~\ref{lem:genVIalpha}~(ii), the invariant set
$\Sigma_+ = - 1/(2v)$ is conjugate to Bowen's eye:
the point $p_{\mathrm{VI}_0}$ is surrounded by spiraling orbits toward the heteroclinic chain with period 2, due to complex conjugate eigenvalues with positive real part, see~\cite{Tak94,Dutilleul}.
%
There are other special solutions of type $\mathrm{VI}_0$, which form the strong-unstable manifold of $p_{\mathrm{VI}_0}$. It consists of two heteroclinics from $p_{\mathrm{VI}_0}$ to the points $\mathrm{T}_1$ and $\mathrm{Q}_1$, which has the conserved quantities $\Sigma_-=0,N_2=-N_3$ and a monotone quantity $\Sigma_+<- 1/(2v)$ and  $\Sigma_+>- 1/(2v)$, respectively.

Let us now turn to type $\mathrm{VII}_0$, but before presenting asymptotic results
we first consider the locally rotationally symmetric (LRS) type
$\mathrm{VII}_0$ subset. This invariant set
is given by $N_-=0$ and $\Sigma_-=0$, where the constraint~\eqref{constrVIVII}
divides the LRS subset into two disjoint invariant sets consisting of the two
lines at $\Sigma_+ = 1$ and $\Sigma_+ = -1$, i.e.,
\begin{equation}\label{LRS}
\mathrm{LRS}^\pm := 
\left\{ (\Sigma_+,0,N_2,N_3) \in \mathbb{R}^4 \Bigm|
\begin{array}{c}
\,\, \Sigma_+=\pm 1, \\
\,\, N_2=N_3\neq 0
\end{array}
\right\},
\end{equation}
where the superscript of $\mathrm{LRS}^\pm$ is determined by the sign of $\Sigma_+$.
Let $N := N_2 = N_3>0$. Then the flow on the $\mathrm{LRS}^\pm$ subsets is determined as follows:
\begin{equation}\label{Neqn}
    N^\prime = -4\left(v \pm \frac{1}{2}\right)N.
\end{equation}
On $\mathrm{LRS}^+$, the variable $N\in (0,\infty)$ monotonically
decreases from  $\lim_{\tau_-\rightarrow -\infty}N =\infty$  to $0$, and hence
this orbit ends at $\mathrm{Q}_1\in \mathrm{K}^\ocircle$
for all $v\in(0,1)$. 
On $\mathrm{LRS}^-$, there are three $v$-dependent
cases: the critical case, $v=1/2$, which results in a line of fixed points; the
subcritical case, $v \in (0,1/2)$, which yields an orbit that emanates from
$\mathrm{T}_1$, where $N\in (0,\infty)$ subsequently monotonically increases,
resulting in $\lim_{\tau_-\rightarrow\infty}N = \infty$; the supercritical case,
$v \in (1/2,1)$, reverses the flow and leads to an orbit for which
$\lim_{\tau_-\rightarrow -\infty}N =\infty$, while it ends at
$\mathrm{T}_1$.
%
%
%
%
%
\begin{prop}\label{lem:genVIVIIalpha}
In Bianchi type $\mathrm{VII}_0$ the limit sets (in $\tau_-$) are as follows:
\begin{itemize}
\item[(i)] When $v \in (0,1/2)$, the $\alpha$-limit set of all non-$\mathrm{LRS}^+$ orbits reside in the set $A_2\cap A_3$, whereas $\mathrm{LRS}^+$ consists of an
orbit such that $\lim_{\tau_-\rightarrow-\infty} N = \infty$, where $N:=N_2=N_3$.
The $\omega$-limit set for all orbits resides in the stable set $S_{\mathrm{VII}_0}\subseteq \mathrm{K}^\ocircle$, apart from the $\mathrm{LRS}^-$ set, which consists of an orbit for which $\lim_{\tau_-\rightarrow\infty} N = \infty$.
\item[(ii)] When $v=1/2$, the $\alpha$-limit set of all non-$\mathrm{LRS}^+$ orbits
is the line of fixed points $\mathrm{LRS}^-$, $N_2=N_3=\mathrm{constant}$,
whereas $\mathrm{LRS}^+$ consists of an
orbit for which $\lim_{\tau_-\rightarrow-\infty} N = \infty$.
The $\omega$-limit set for all orbits, apart from the $\mathrm{LRS}^-$ set, lies in the stable set $S_{\mathrm{VII}_0}\subseteq \mathrm{K}^\ocircle$.
\item[(iii)] When $v\in (1/2,1)$, all non-$\mathrm{LRS}$ orbits asymptotically
satisfy
\begin{equation}
\lim_{\tau_-\rightarrow-\infty}\Sigma_+=-\frac{1}{2v}, \qquad \lim_{\tau_-\rightarrow-\infty}N_+=\infty, \qquad N_+:=N_2+N_3,
\end{equation}
whereas $(\Sigma_-,N_-)=(\sqrt{1-\Sigma_+^2}\cos\psi,\sqrt{1-\Sigma_+^2}\sin\psi)$ are asymptotically oscillatory, since $\psi$ is strictly monotone as $\tau_-\rightarrow -\infty$. Each $\mathrm{LRS}^\pm$ set consists of an orbit
such that $\lim_{\tau_-\rightarrow-\infty} N = \infty$.
The $\omega$-limit set for all orbits resides in the stable set $S_{\mathrm{VII}_0}\subseteq \mathrm{K}^\ocircle$, apart from the invariant set of co-dimension
one, characterized by $\Sigma_+=-1/(2v)$, for which the heteroclinic cycle of
period 2 on the $\mathrm{II}_2\cup\mathrm{II}_3\cup\mathrm{K}^\ocircle$
boundary is the $\omega$-limit.
\end{itemize}
\end{prop}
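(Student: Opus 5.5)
The plan is to restrict system \eqref{full:subs} to $N_1=0$ and to work with $N_\pm:=N_2\pm N_3$ and the constraint \eqref{constrVIVII}, written as $\Sigma^2+N_-^2=1$. With $N_1=0$ one has $\S_+=2N_-^2$ and $\S_-=2\sqrt3\,N_-N_+$ (up to the sign convention for $N_-$). Since $1-\Sigma^2=N_-^2$, the whole analysis will rest on three quantities built from this reduced system.

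\emph{First quantity.} The evolution of $\Sigma_+$ becomes
\begin{equation*}
(1+2v\Sigma_+)^\prime = 4vN_-^2\,(1+2v\Sigma_+).
\end{equation*}
Hence $\Sigma_+=-1/(2v)$ is invariant (it meets the disc only when $v>1/2$), and $|1+2v\Sigma_+|$ is nondecreasing in $\tau_-$.

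\emph{Second quantity.} Set
\begin{equation*}
Z:=\frac{N_2N_3}{(1+2v\Sigma_+)^2}.
\end{equation*}
Using $(N_2N_3)^\prime=-4(2v\Sigma^2+\Sigma_+)N_2N_3$ and the constraint, one finds
\begin{equation*}
(\log Z)^\prime=-4(2v+\Sigma_+).
\end{equation*}
This is strictly negative, except in the degenerate situation $v=1/2$, $\Sigma_+=-1$. The latter is exactly $\mathrm{LRS}^-$, which is treated by hand through \eqref{Neqn}.

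\emph{Third quantity.} Write $(\Sigma_-,N_-)=\sqrt{1-\Sigma_+^2}\,(\cos\psi,\sin\psi)$. Then $\psi^\prime=-2\sqrt3\,N_+ + O(1)$, with the $O(1)$ term bounded on the state space. So $\psi$ is strictly monotone wherever $N_+$ is large.

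\emph{$\omega$-limits.} Since $Z$ is decreasing, $N_2N_3\le Z(0)(1+2v\Sigma_+)^2$ is bounded forward. Together with $N_+^2=N_-^2+4N_2N_3$ and $|N_-|\le1$, the forward orbit is precompact. The monotonicity principle applied to $Z$ then places the $\omega$-limit set in $\{N_2N_3=0\}$, i.e.\ in $\mathrm{II}_2\cup\mathrm{II}_3\cup\mathrm{K}^\ocircle$.
\begin{itemize}
\item[(a)] If $\Sigma_+\neq-1/(2v)$, then $|1+2v\Sigma_+|$ is monotone and tends to a nonzero limit. It is therefore constant on the $\omega$-limit set, which forces $N_-=0$ there, so the $\omega$-limit set lies in $\mathrm{K}^\ocircle$.
\item[(b)] A limit point in $\mathrm{int}(A_2\cup A_3)$ is excluded. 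By invariance of $\omega$-limit sets it would drag in its type II unstable orbit, on which $N_-\neq0$, contradicting (a). Hence the $\omega$-limit set lies in $S_{\mathrm{VII}_0}$.
\item[(c)] On the 2-dimensional invariant set $\Sigma_+=-1/(2v)$, which only occurs for $v>1/2$, I would use Poincaré--Bendixson together with $Z$. This gives the same Bowen's-eye picture as for type $\mathrm{VI}_0$ in Proposition~\ref{lem:genVIalpha}(ii), and the $\omega$-limit is the period-2 heteroclinic cycle.
\item[(d)] The $\mathrm{LRS}^\pm$ orbits are explicit from \eqref{Neqn}. For $v<1/2$ the $\mathrm{LRS}^-$ orbit escapes with $N\to\infty$; this is the exception in (i).
\end{itemize}

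\emph{$\alpha$-limits.} Backward in time $Z$ increases. If $Z$ stayed bounded, the $\alpha$-limit set would be a compact set on which $Z$ is constant, hence contained in $\{2v+\Sigma_+=0\}$ or in the boundary where $Z$ vanishes or blows up.
\begin{itemize}
\item For $v<1/2$, this analysis, with the signs of the eigenvalues on $\mathrm{K}^\ocircle$, should put the $\alpha$-limit set in $A_2\cap A_3$.
\item For $v=1/2$, it should put the $\alpha$-limit set on the line $\mathrm{LRS}^-$. There $1+2v\Sigma_+=0$ is allowed, and $Z$ is to be read with care.
\item For $v>1/2$, $2v+\Sigma_+>0$ on the whole disc, so $Z\to\infty$ backward and $N_+\to\infty$. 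Then $\psi$ rotates with speed of order $N_+$. Averaging $N_-^2=(1-\Sigma_+^2)\sin^2\psi$ over a rotation shows that $\int_{-\infty}^0N_-^2\,d\tau_-=\infty$ whenever $\Sigma_+$ stays away from $\pm1$. Integrating $(\log|1+2v\Sigma_+|)^\prime=4vN_-^2$ backward then gives $1+2v\Sigma_+\to0$, i.e.\ $\Sigma_+\to-1/(2v)$.
\end{itemize}

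I expect the main obstacle to be this averaging step in (iii). One must control $\psi^\prime$ against the slow variation of $\Sigma_+$ and $N_+$, showing that $\psi$ winds infinitely often while the amplitude $1-\Sigma_+^2$ stays bounded below. One must also exclude $\Sigma_+\to\pm1$, which I would do using that the $\mathrm{LRS}$ sets are invariant and are approached only nongenerically. My first attempt would be a change of time $d\psi$ together with a comparison estimate showing that the average of $N_-^2$ over each rotation is at least a constant times $1-\Sigma_+^2$.
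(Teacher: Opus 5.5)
Your two monotone quantities are computed correctly, and for $v\ge 1/2$ your $\omega$-limit argument (a)--(b) is essentially the standard one. In (b), though, you need a Butler--McGehee type argument for the normally hyperbolic arc of Kasner equilibria rather than plain invariance, since the orbit of a Kasner point is the point itself. There are, however, genuine gaps.

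\emph{Case (i) is not proved.} Your claim that $(\log Z)^\prime=-4(2v+\Sigma_+)$ is negative except at $v=1/2$, $\Sigma_+=-1$ is false for $v\in(0,1/2)$. It is positive on $\{\Sigma_+<-2v\}$, a nonempty open part of the state space. That region contains $\mathrm{T}_1$, the arc $A_2\cap A_3$ (there $\Sigma_+\pm\sqrt3\Sigma_-\le-2v$, hence $\Sigma_+\le-2v$) and the line $\mathrm{LRS}^-$. On $\mathrm{LRS}^-$ one has $(\log Z)^\prime=4(1-2v)>0$, which is exactly the forward blow-up $N\to\infty$ asserted in (i). So three things you rely on are unavailable for $v<1/2$: the forward bound $N_2N_3\le Z(0)(1+2v\Sigma_+)^2$, the monotonicity principle, and ``backward in time $Z$ increases''. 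Consequently (i) is not proved in either time direction.

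To repair the forward direction, use only that $\Sigma_+$ is monotone and show that every non-$\mathrm{LRS}^-$ orbit eventually enters $\{\Sigma_+>-2v\}$, after which your (a)--(b) apply. Suppose instead $\Sigma_+\uparrow\sigma\le-2v$. Then $\int_0^\infty N_-^2\,d\tau_-<\infty$, while $Z$ is nondecreasing, so $N_2N_3\ge Z(0)(1-2v)^2>0$. You must then show that $N_+\to\infty$ and that the resulting rotation of $\psi$ forces $\int_0^\infty N_-^2\,d\tau_-=\infty$, a contradiction.

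The backward claim in (i) needs an analogous argument: exclude $\lim_{\tau_-\to-\infty}\Sigma_+>-2v$ by the same rotation estimate, then place the $\alpha$-limit in $A_2\cap A_3$ via the unstable fibres of the Kasner circle. So the averaging lemma you postpone to (iii) is already indispensable for (i). Note that the paper gives no argument for this proposition; it defers to \cite{HellLappicyUggla}, saying only that three monotone quantities are used. In the subcritical range that must involve something beyond your $Z$.

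\emph{The $\alpha$-limit argument in (iii) is circular.} The inference ``$Z\to\infty$ backward, hence $N_+\to\infty$'' is a non sequitur. $Z=N_2N_3/(1+2v\Sigma_+)^2$ can blow up through $1+2v\Sigma_+\to0$ with $N_2N_3$ bounded, and that is precisely the limit you want to establish. Using $N_+\to\infty$ to derive $\Sigma_+\to-1/(2v)$ is therefore circular.

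The correct order is as follows. Since $(\log|1+2v\Sigma_+|)^\prime=4vN_-^2$, the backward limit of $|1+2v\Sigma_+|$ is positive if and only if $\int_{-\infty}^0N_-^2\,d\tau_-<\infty$. In that case $N_2N_3=Z(1+2v\Sigma_+)^2\to\infty$, so $N_+\to\infty$. Averaging $N_-^2=(1-\Sigma_+^2)\sin^2\psi$ with $\psi^\prime=-2\sqrt3N_+-(2v+\Sigma_+)\sin2\psi$ then gives a contradiction. Hence $\Sigma_+\to-1/(2v)$.

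The limit $N_+\to\infty$ then needs a separate argument. One route is $(\log N_2N_3)^\prime=-4\bigl[\Sigma_+(1+2v\Sigma_+)+2v\Sigma_-^2\bigr]$, whose first term now tends to zero.

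Three smaller points in (ii) and (iii):
\begin{itemize}
\item Your concern about $\Sigma_+\to\pm1$ backward is unnecessary. $|1+2v\Sigma_+|$ is nonincreasing backward, so $\Sigma_+$ stays between $\Sigma_+(0)$ and $-1/(2v)$, and $1-\Sigma_+^2$ is automatically bounded below. An appeal to non-genericity would not suffice anyway, since (iii) is claimed for all non-$\mathrm{LRS}$ orbits.
\item On the plane $\Sigma_+=-1/(2v)$ your $Z$ is undefined. Use $N_2N_3$ itself, which satisfies $(N_2N_3)^\prime=-8v\Sigma_-^2N_2N_3$ there.
\item The $\alpha$-limit claim in (ii) is only asserted, not derived.
\end{itemize}
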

Since the type $\mathrm{VII}_0$ subset is unbounded, one could use a Poincaré compactification to describe its dynamical structure at infinity, as it could play a role in the generic dynamics; 
see e.g. \cite{Gao} for such usage in cosmology. 
Indeed, the oscillatory behavior at infinity in case (iii) above is described by a periodic orbit or a heteroclinic cycle, and its change of stability at $v=1/2$ would be critical when proving an attractor theorem for $v\in (0,1/2)$.
%
%
%
%
\subsection{Bianchi types $\mathrm{VIII}$ and $\mathrm{IX}$}\label{sec:proof}

Another monotone function for the types $\mathrm{VIII}$ and $\mathrm{IX}$ models is given by
\begin{equation}\label{Delta}
\Delta := 3|N_1N_2N_3|^{2/3} \qquad \text{ that satisfies } \qquad \Delta^\prime = -8v\Sigma^2\Delta.
\end{equation}
%
%
%
Thus $\Delta$ is monotonically decreasing when $\Sigma^2 >0$,
and has an inflection point when $\Sigma^2 = 0$, since $\left.\Delta''\right|_{\Sigma^2 =0} = 0$ and $\left.\Delta'''\right|_{\Sigma^2 =0} = - \frac{8}{3}v\left({\cal S}_1^2 + {\cal S}_2^2 + {\cal S}_3^2\right)\Delta$, 
%
%
where ${\cal S}_1^2 + {\cal S}_2^2 + {\cal S}_3^2>0$.
Hence,
\begin{equation}\label{Delta0}
\lim_{\tau_-\rightarrow\infty}\Delta = 0.
\end{equation}
%
%
%
%
Due to~\eqref{Delta0}, at least one of the variables $N_\alpha$ decays to $0$ for all $v \in (0,1)$. Thus the $\omega$-limit set of all type $\mathrm{IX}$ orbits resides in the union of the closure of the type $\mathrm{VII}_0$ subsets,
whereas the $\omega$-limit set of all type $\mathrm{VIII}$ orbits lies in closure of the union
of one type $\mathrm{VII}_0$ subset and the two type $\mathrm{VI}_0$ subsets.
Next, we prove global existence and boundedness of solutions.
\begin{lem}\label{lem:bdd}
    For any $v\in (0,1)$, all type $\mathrm{VIII}$ and $\mathrm{IX}$ solutions of~\eqref{intro_dynsyslambdaR} exist for all $\tau_-\in \mathbb{R}_+$ and satisfy  $ -\Delta \leq \Omega_k \le 1$ and $\Sigma^2 \leq 1 + \Delta$. In particular,  $0 \leq \lim_{\tau_-\rightarrow\infty}\Omega_k \leq 1$ and $\lim_{\tau_-\rightarrow\infty}\Sigma^2 \leq 1$. Moreover, if $v\in (1/2,1)$, then $N_\alpha,\alpha=1,2,3$ are also bounded for $\tau_-\in \mathbb{R}_+$.
\end{lem}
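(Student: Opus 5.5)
The bounds on $\Omega_k$ and $\Sigma^2$ come from the constraint~\eqref{constraint} together with an algebraic inequality relating $\Omega_k$ to $\Delta$. Global existence then follows from these a priori bounds. Boundedness of the $N_\alpha$ for $v\in(1/2,1)$ requires a separate dynamical argument near the type $\mathrm{VII}_0$ directions at infinity.

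\emph{Step 1: the inequality $\Omega_k\ge -\Delta$.} For type $\mathrm{VIII}$, say $N_1<0<N_2,N_3$, I will write
\begin{equation*}
\Omega_k=(N_2-N_3)^2+N_1^2-2N_1(N_2+N_3).
\end{equation*}
Every term is nonnegative, so $\Omega_k\ge 0\ge-\Delta$. For type $\mathrm{IX}$ (all $N_\alpha>0$, up to the overall sign), both $\Omega_k$ and $\Delta$ are homogeneous of degree two. It therefore suffices to minimize $\Omega_k$ on the surface $N_1N_2N_3=1$. A Lagrange multiplier computation, or the standard Ringström-type estimate, should show that the minimum is attained at $N_1=N_2=N_3=1$, where $\Omega_k=3-6=-3=-\Delta$. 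To handle the non-compactness of the surface, I will note that $\Omega_k\ge 0$ whenever one $N_\alpha$ dominates the sum of the other two, since then $\Omega_k=(N_\alpha-N_\beta-N_\gamma)^2-4N_\beta N_\gamma$ is controlled. Given this, the constraint $1=\Sigma^2+\Omega_k$ yields $\Omega_k\le 1$ and $\Sigma^2=1-\Omega_k\le 1+\Delta$.

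\emph{Step 2: global existence and the limits.} By~\eqref{Delta}, $\Delta$ is nonincreasing, so $\Sigma^2\le 1+\Delta(0)$ for all forward times. Each equation $N_\alpha'=-2(2v\Sigma^2+\Sigma_\alpha)N_\alpha$ is then linear in $N_\alpha$ with a bounded coefficient. Hence $|N_\alpha|$ grows at most exponentially and cannot blow up in finite time. Since $\Sigma_\pm$ are already bounded, the standard extension theorem gives existence for all $\tau_-\in\mathbb{R}_+$. Using~\eqref{Delta0}, the bounds $-\Delta\le\Omega_k\le1$ and $\Sigma^2\le1+\Delta$ pass to the limit, giving $0\le\liminf\Omega_k$, $\limsup\Omega_k\le1$ and $\limsup\Sigma^2\le1$.

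\emph{Step 3: boundedness of $N_\alpha$ for $v>1/2$.} This is the main obstacle. Suppose $\sup|N_1|=\infty$. Since $\Delta$ is bounded, $|N_2N_3|$ is small whenever $|N_1|$ is large. Moreover, $\Omega_k\le1$ forces $(N_1-N_2-N_3)^2\le 1+4N_2N_3$. Hence large $N_1$ forces, say, $N_2\approx N_1$ large with $N_3\to0$. Thus the orbit can only escape near the unbounded end of a type $\mathrm{VII}_0$ subset, with $N_-=N_1-N_2$ bounded and $N_+=N_1+N_2$ large.

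Near there, the variables $(\Sigma_\mp,N_-)$ rotate rapidly, with angular speed proportional to $N_+$. I will average the equation
\begin{equation*}
N_+'=-2\bigl(2v\Sigma^2\bigr)N_+ +(\text{terms in } \Sigma_\alpha N_\alpha)
\end{equation*}
over one rotation, in analogy with Proposition~\ref{lem:genVIVIIalpha}(iii). There, $\lim_{\tau_-\to-\infty}N_+=\infty$ means that in forward time $N_+$ decreases on average near infinity precisely when $v>1/2$: on $\mathrm{LRS}^-$ one has $N'=-4(v-\tfrac12)N<0$, and the averaged drift away from $\Sigma_+=-1/(2v)$ should have the same sign. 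The plan is therefore to show that, on the region $\{N_+\ge M,\ |N_3|\le\varepsilon\}$, the quantity $\log N_+$ has a strictly negative averaged derivative up to errors of order $1/N_+$ and $\varepsilon$. These errors are controlled because $\Delta\to0$ and $N_3\to0$.

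Should the averaging be too delicate, I would instead look for a Lyapunov-type function adapted from the type $\mathrm{VII}_0$ monotone quantities of~\cite{HellLappicyUggla}. A natural candidate is something like $N_1N_2$ multiplied by a power of $(1+2v\Sigma_+)$, and the task would be to show it is nonincreasing for large $N_+$. Either route gives an upper bound on $N_+$, and hence on all $N_\alpha$.
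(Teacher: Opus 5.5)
Your proposal is correct in outline, but it takes a different route from the paper in Steps~1 and~3. Your Step~2 is the paper's exponential-growth argument.

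For $\Omega_k\ge-\Delta$, the paper applies Schur's inequality to $N_\alpha^{2/3}$ and then AM--GM. That takes two lines and raises no compactness issue. Your reduction by homogeneity to the slice $N_1N_2N_3=1$ plus Lagrange multipliers also works. Critical points satisfy $(N_\alpha-N_\beta)(N_\alpha+N_\beta-N_\gamma)=0$ for every pair, which forces $N_1=N_2=N_3=1$. Your route exhibits the extremal configuration but is longer.

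For boundedness of the $N_\alpha$ when $v\in(1/2,1)$, the paper only argues two things: type $\mathrm{IX}$ orbits approach the closure of type $\mathrm{VII}_0$ by \eqref{Delta0}, and a perturbation of the dissipative type $\mathrm{VII}_0$ flow stays dissipative. Your averaging argument is the quantitative version of this. It localizes the only escape route (two $N$'s large, the third $O(N_+^{-2})$). It also needs only $\Delta\le\Delta(0)$, so you can drop the appeal to $\Delta\to0$ at the end of Step~3. This matters, because \eqref{Delta0} is most naturally proved through compact $\omega$-limit sets, i.e.\ once boundedness is already known.

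Two points still have to be supplied.

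\textbf{Non-compactness in Step~1.} The claim that $\Omega_k\ge0$ when one $N_\alpha$ dominates does not cover all escapes to infinity on the slice. Along $N_1=N_2=t$, $N_3=t^{-2}$, no variable dominates and $\Omega_k=t^{-4}-4t^{-1}<0$. Use instead $\Omega_k=(N_\alpha-N_\beta-N_\gamma)^2-4N_\beta N_\gamma\ge-4/N_\alpha$ for each $\alpha$. Then $\Omega_k\ge-3$ once $\max_\alpha N_\alpha\ge 4/3$, and the rest of the slice is compact.

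\textbf{The sign of the averaged drift in Step~3.} It follows neither from the $\mathrm{LRS}^-$ rate nor from the qualitative statement in Proposition~\ref{lem:genVIVIIalpha}(iii); it needs a computation, which does close. By equivariance take $N_1$ small and $N_\pm=N_2\pm N_3$. Then
\begin{equation*}
(\log N_+)'=-2(2v\Sigma^2+\Sigma_+)+O(N_+^{-1}),\qquad
(\Sigma_-N_-)'=2\sqrt3\,N_+(N_-^2-\Sigma_-^2)+O(1),
\end{equation*}
and $\Sigma_-^2+N_-^2=1-\Sigma_+^2+O(N_+^{-1})$. Hence
\begin{equation*}
W:=\log N_+-\frac{v}{\sqrt3}\,\frac{\Sigma_-N_-}{N_+}
\quad\text{satisfies}\quad
W'=-2\bigl(v\Sigma_+^2+\Sigma_++v\bigr)+O(N_+^{-1})\le-\frac{4v^2-1}{2v}+O(N_+^{-1}).
\end{equation*}
Since $W-\log N_+=O(N_+^{-1})$, this gives $N_+\le\max\{N_+(0),M\}e^{C/M}$ for $M$ large. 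The negative discriminant $1-4v^2$ is exactly where $v>1/2$ enters.

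This corrected function is preferable to your fallback candidate $N_\beta N_\gamma|1+2v\Sigma_+|^{-1}$. That candidate is monotone in type $\mathrm{VII}_0$, but it is singular at $\Sigma_+=-1/(2v)$, which is where the averaged contraction is weakest. There, the coupling term $12vN_1(N_+-N_1)$ in $(1+2v\Sigma_+)'$ destroys monotonicity in types $\mathrm{VIII}$ and $\mathrm{IX}$.
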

\begin{pf}
%
Since $\Sigma^2\geq 0$, the constraint~\eqref{intro_cons1} implies that  $\Omega_k \leq 1$. Next, we will show that $\Omega_k$ in~\eqref{Omega_k} and $\Delta$ in~\eqref{Delta} satisfy
$\Omega_k + \Delta \geq 0$. Indeed, we apply Schur's inequality given by:
\begin{equation}
A^3 + B^3 + C^3 + 3ABC \geq
AB(A+B) + AC(A+C) + BC(B+C),
\end{equation}
with $A = N_1^{2/3}$, $B = N_2^{2/3}$, $C = N_3^{2/3}$, which amounts to:
\begin{align}\label{ineqX}
N_1^2 + N_2^2 + N_3^2 + 3 (N_1N_2N_3)^{2/3}
\geq \,
& (N_1N_2)^{2/3} (N_1^{2/3}+ N_2^{2/3})\nonumber\\
& + (N_1N_3)^{2/3} (N_1^{2/3}+ N_3^{2/3}) \\
& + (N_2N_3)^{2/3} (N_2^{2/3}+ N_3^{2/3}).\nonumber
\end{align}
We apply the arithmetic mean and geometric mean (AM-GM) inequality to each line of the right-hand side of \eqref{ineqX}, e.g. $N_i^{2/3} + N_j^{2/3} \geq 2(N_i^{2/3}N_j^{2/3})^{1/2}=2 (N_iN_j)^{1/3}$, which yields:
\begin{equation}
N_1^2 + N_2^2 + N_3^2 + 3 (N_1N_2N_3)^{2/3}
\geq 2 (N_1N_2 + N_1N_3 + N_2N_3),
\end{equation}
and proves $\Omega_k \geq -\Delta$ as desired.
%
%
Thus, the constraint \eqref{intro_cons1} implies that
\begin{equation}\label{SigmaBOUND}
    \Sigma^2(\tau_{-}) 
    \leq 1 + \Delta (\tau_-) \leq 1 +\Delta (0) ,
\end{equation}
for all times $\tau_- \geq 0$, where we have used that $\Delta$ is non-increasing, due to \eqref{Delta}. In particular, $|\Sigma_\alpha (\tau_-)| \leq \sqrt{6(1+\Delta(0))}$ for each $\alpha=1,2,3$ and all $\tau_-\geq 0$.

To prove that $ N_\alpha,\alpha=1,2,3,$ are global in time, note \eqref{intro_dynsyslambdaR_N} yields $N_\alpha^\prime \leq 2 \sqrt{6 (1+\Delta(0))} N_\alpha$
and thus $N_\alpha(\tau_-)\leq \exp (\sqrt{6 (1+\Delta(0))} \tau_-) N_\alpha(0)$, implying that solutions exist for all $\tau_-\geq 0$; they could only blow up in infinite time.

Lastly, for any $v\in (1/2,1)$, we prove that $N_\alpha,\alpha=1,2,3$ are bounded for all $\tau_-\in \mathbb{R}_+$. Note that the type $\mathrm{VII}_0$ flow is dissipative, as solutions converge to a compact set, see Proposition~\ref{lem:genVIVIIalpha}. Since type $\mathrm{IX}$ solutions converge to the closure of the type $\mathrm{VII}_0$, due to~\eqref{Delta0}, the type $\mathrm{IX}$ solutions for sufficiently large times are nearby the type $\mathrm{VII}_0$ boundary and thereby can be seen as a perturbation of a dissipative flow - which is still dissipative. 
%
%
%
\qed
\end{pf}
This proof of boundedness of $N_\alpha,\alpha=1,2,3$ holds for $v\in (1/2,1)$, but fails for $v\in (0,1/2]$. Indeed, for $v=1/2$, some solutions converge to the unbounded line of equilibria $\mathrm{LRS}^-$; whereas this line is a blow-up solution for $v\in (0,1/2)$. Either way, the types $\mathrm{VI}_0$ and $\mathrm{VII}_0$ flows are not dissipative and therefore we cannot use such a perturbation argument.




\subsubsection{Non-generic Dynamics}

We now describe the dynamics of non-generic sets, which consists of the fixed point $p_{\mathrm{VI}_0}$ and the two-dimensional locally rotationally symmetric set $\mathcal{LRS}_\alpha$.

\textbf{The fixed point $p_{\mathrm{VI}_0}$}

On one hand, for $v\in (0, 1/2)$, note that $p_{\mathrm{VI}_0}$ does not play a role in the type $\mathrm{VI}_0$ dynamics, where $|\Sigma_+|<1$, which is not the case for this point. 
On the other hand, for $v\in (1/2,1)$, it lies within the constraints and thus we linearize the system \eqref{full:subs} at it. 
This yields one negative eigenvalue, $-2/v$, corresponding to the stable invariant set defined by the constraint, as its associated eigenvector is normal to the constraint. The remaining 4 eigenvalues are:
\begin{equation}
    \frac{4v^2-1 \pm  \sqrt{16v^4 - 56v^2 + 13}}{2v}, \qquad  \frac{4v^2-1}{v},\qquad -\frac{3}{v},
\end{equation}
where the three eigenvalues with positive real part correspond to its 3d unstable manifold consisting of the type $\mathrm{VI}_0$ subset (where the pair of complex eigenvalues correspond to the 2d weak-unstable submanifold given by $\Sigma_+=-1/(2v)$ and the other eigenvalue yields the 1d strong-unstable submanifold given by $\Sigma_-=0,N_2=-N_3$); and 
the negative eigenvalue corresponds to a 1d strong-stable submanifold, which we discuss this next.
%
\begin{lem}\label{lem:strong-stable-pVI}
When $v\in (1/2,1)$, the  equilibrium \eqref{p_VI} (with additionally $N_1=0$) possesses a one-dimensional of strong-stable manifold of the flow \eqref{full:subs}-\eqref{constraint}, denoted by $W^{ss}(p_{\mathrm{VI}_0})$, which is tangent to the eigenvector (corresponding to the strong-stable eigenvalue $\lambda_{ss} = -3/v$):
\begin{equation}\label{1dSSEV}
    e_{ss} = (\Sigma_+, \Sigma_-, N_1, N_2, N_3) = \left( 0 ,  1, \dfrac{8v^2+1}{\sqrt{3(4v^2-1)}} ,  \dfrac{\sqrt{3(4v^2-1)}}{6} ,  \dfrac{\sqrt{3(4v^2-1)}}{6} \right).
\end{equation}
Locally, this manifold can be parametrized by $\Sigma_-$ according to:
%
%
%
\begin{subequations}\label{Wsspviv3}
\begin{alignat}{8}
\Sigma_+ &={} & -\frac{1}{2v} \quad 
           &+ \qquad \quad 0 &  
           &+ \, O(\Sigma_-^2), \\
N_1 &={} & 0 \quad\,\, 
           &+ \frac{8v^2+1}{\sqrt{3(4v^2-1)}} &\Sigma_- 
           &+ \, O(\Sigma_-^2), \\
N_2 &={} & \frac{\sqrt{4v^2-1}}{4v} 
      &+ \frac{\sqrt{3(4v^2-1)}}{6} &\Sigma_- 
      &+ \, O(\Sigma_-^2), \\
N_3 &={} & -\frac{\sqrt{4v^2-1}}{4v} 
      &+ \frac{\sqrt{3(4v^2-1)}}{6} &\Sigma_- 
      &+ \, O(\Sigma_-^2).
\end{alignat}
\end{subequations}
%
%
\end{lem}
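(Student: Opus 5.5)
The plan is to obtain $W^{ss}(p_{\mathrm{VI}_0})$ from the strong-stable manifold theorem applied to the full five-dimensional system \eqref{full:subs} at $p:=(-\tfrac{1}{2v},0,0,\tfrac{\sqrt{4v^2-1}}{4v},-\tfrac{\sqrt{4v^2-1}}{4v})$. I would then show that this manifold lies in the constraint surface \eqref{constraint}.

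\emph{Step 1: linearize and verify the eigenpair.} Compute the Jacobian $J$ of \eqref{full:subs} at $p$, using $N_1=0$, $\Sigma_-=0$ and $N_2=-N_3$. This gives $\Sigma^2=1/(4v^2)$ and $N_2^2=(4v^2-1)/(16v^2)$. Several entries simplify:
\begin{itemize}
\item the $N_1$-row reduces to the single diagonal entry $-2(2v\Sigma^2-2\Sigma_+)=-2(\tfrac{1}{2v}+\tfrac1v)=-3/v$;
\item the remaining rows couple $N_1$ only through $\partial{\cal S}_\pm/\partial N_1$.
\end{itemize}
Consequently $-3/v$ is an eigenvalue, and its eigenvector must have nonzero $N_1$-component. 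I would then solve $(J+\tfrac{3}{v}I)e=0$ with the normalization $e_{\Sigma_-}=1$. The discrete symmetry $(\Sigma_-,N_2,N_3)\mapsto(-\Sigma_-,-N_3,-N_2)$, which fixes $p$, suggests that the $\Sigma_+$-component vanishes and that $e_{N_2}=e_{N_3}$. One then checks directly that $e=e_{ss}$ in \eqref{1dSSEV}. Together with the spectrum listed just before the lemma ($-2/v$, $-3/v$, and three eigenvalues with positive real part), we have $-3/v<-2/v<0$. So $\lambda_{ss}=-3/v$ is simple and strictly separated from the rest of the spectrum.

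\emph{Step 2: strong-stable manifold and the constraint.} The spectral gap lets the strong-stable (pseudo-stable) manifold theorem give a unique, locally invariant, one-dimensional $C^1$ manifold $W^{ss}(p)$ tangent to $e_{ss}$. It is characterized by the decay rate $e^{\lambda\tau_-}$ for any $\lambda\in(-3/v,-2/v)$.

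To place it inside the constraint, set $G:=\Sigma^2+\Omega_k-1$. Differentiating along \eqref{full:subs} should give $G'=h\,G$ for some smooth function $h$; this is the standard propagation of the Hamiltonian constraint, and the $-2/v$ eigenvector normal to the constraint is consistent with it. Hence $\{G=0\}$ is invariant. Next check $dG(p)\,e_{ss}=0$, so $e_{ss}$ is tangent to $\{G=0\}$. Applying the same theorem to the restricted flow on the 4-manifold $\{G=0\}$ yields a 1d manifold with the same decay characterization. By uniqueness in the full space, the two manifolds coincide, so $W^{ss}(p_{\mathrm{VI}_0})\subset\{G=0\}$.

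\emph{Step 3: parametrization.} Since $e_{ss}$ has $\Sigma_-$-component $1$, the implicit function theorem writes $W^{ss}$ locally as a graph over $\Sigma_-$. Its first-order Taylor coefficients are the components of $e_{ss}$, which gives \eqref{Wsspviv3}.

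The main obstacle is Step 1: computing $J$ reliably, in particular the partial derivatives of ${\cal S}_\pm$ at $p$, and confirming the exact coefficient $(8v^2+1)/\sqrt{3(4v^2-1)}$. I would first use the $N_1$-row and the symmetry to reduce the problem to a $2\times2$ linear system for $(e_{N_1},e_{N_2})$.
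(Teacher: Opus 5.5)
Your proposal is correct and follows the paper's route: the paper invokes the strong-stable manifold theorem (hyperbolicity plus the gap $-3/v<-2/v<0$) and reads the linear terms of \eqref{Wsspviv3} off the base point and tangency to $e_{ss}$. Your explicit eigenvector check and constraint argument fill in what the paper leaves implicit; indeed $G'=-8v\Sigma^2 G$, and $-8v\Sigma^2=-2/v$ at $p_{\mathrm{VI}_0}$ is exactly the eigenvalue transverse to the constraint.

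Two minor fixes. First, the reflection fixing $p_{\mathrm{VI}_0}$ is $(\Sigma_+,\Sigma_-,N_1,N_2,N_3)\mapsto(\Sigma_+,-\Sigma_-,-N_1,-N_3,-N_2)$; with $N_1$ left unchanged, as you wrote it, $\mathcal{S}_+$ is not preserved. Second, you should note that $W^{ss}$ is $C^\infty$ because the vector field is polynomial, since $C^1$ alone gives only $o(\Sigma_-)$ rather than $O(\Sigma_-^2)$ remainders.
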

\begin{pf}
    We obtain the existence of the strong-stable manifold by standard methods, due to hyperbolicity and a spectral-gap.
    The first two columns in equation~\eqref{Wsspviv3} are the usual compatibility conditions for the graph of the stable manifold: at the fixed point (when $\Sigma_-=0$), the graph should yield the fixed point \eqref{p_VI}; and the graph should be tangent to the strong-stable eigenvector in~\eqref{1dSSEV}. One could also compute the second order expansions by matching coefficients in the invariance condition, but we refrain from doing so here.
    \qed
\end{pf}
%
%

\textbf{Locally rotationally symmetric set $\mathcal{LRS}_\alpha$}

The dynamical system~\eqref{intro_dynsyslambdaR} for Bianchi types $\mathrm{VIII}$ and $\mathrm{IX}$ 
admits (physically equivalent) invariant locally rotationally symmetric (LRS) sets. For type $\mathrm{IX}$, they are given by
\begin{equation}\label{LRSIXdef}
\begin{aligned}
\mathcal{LRS}_\alpha &:=
\left\{(\Sigma_\alpha,\Sigma_\beta,\Sigma_\gamma,N_\alpha,N_\beta,N_\gamma) \in \mathbb{R}^6 \Bigm|
\begin{array}{c}
\,\, \Sigma_\beta=\Sigma_\gamma, \, N_\alpha>0, \, N_\beta=N_\gamma>0 \\
\,\, \text{ satisfying \eqref{intro_cons1}-\eqref{intro_cons2}}
\end{array}
\right\},
\end{aligned}
\end{equation}
where $(\alpha\beta\gamma)$ is a permutation of $(123)$, while type $\mathrm{VIII}$ only
admits a single LRS set since one of the variables $N_1$, $N_2$, $N_3$ has an opposite sign compared
to the other two. Without loss of generality, we use Misner variables according to $\Sigma_\alpha=-2\Sigma_+$ and $\Sigma_-=\Sigma_\beta-\Sigma_\gamma=0$:
\begin{equation}\label{LRS1IXdef}
\begin{aligned}
\mathcal{LRS}_1 &:=
\left\{(\Sigma_+,0,N_1,N_2,N_3) \in \mathbb{R}^5 \Bigm|
\begin{array}{c}
\,\,  N_1>0, \, N_2=N_3>0 \\
\,\, 1 = \Sigma_{+}^2 + N_1^2 - 4N_1N_2
\end{array}
\right\}.
\end{aligned}
\end{equation}
Note that this consists of an open set with boundary given by the two lines $\mathrm{LRS}^\pm$ in \eqref{LRS} (whenever $N_1=0$ and thus $\Sigma_{+}=\pm 1$), and the arc (whenever $N_2=0$ and thus $1 = \Sigma_{+}^2 + N_1^2$) that consists of the heteroclinic from $\mathrm{Q}_1$ to $\mathrm{T}_1$.
The LRS types $\mathrm{VIII}$ and $\mathrm{IX}$ sets have three distinct 
dynamical regimes, the subcritical, critical and supercritical cases. 
The analysis of the set in~\eqref{LRSIXdef} was carried out for the critical case, when $v=1/2$, in \cite{LRS}; see also \cite[Section 11]{Ringstrom} and \cite[Proposition 1]{Ringstrom2}.
We give a different proof of these results. 

The set $\mathcal{LRS}_1$ has the following skew-product flow, since $(\Sigma_+,N_1)$ decouples from $N_2$.:
\begin{subequations}\label{LRSIXeq}
    \begin{align}
            \Sigma'_{+} &= 4v (1 - \Sigma_{+}^2) \left(\Sigma_{+} - \frac{1}{4v}\right) - 3N_1^2,\label{LRSIXeq1}\\
            N_{1}' &= -4v \left(\Sigma_{+} - \frac{1}{v}\right)\Sigma_{+}N_{1},\label{LRSIXeq2}\\
            N_{2}' &= -4v \left(\Sigma_{+} + \frac{1}{2v}\right) \Sigma_{+}N_{2},\label{LRSIXeq3} 
    \end{align}
\end{subequations}
with the constraint $ \Sigma_{+}^2 + N_1^2 - 4N_1N_2 = 1$. 
Since $N_1,N_2>0$, we have $\Sigma_{+}^2 + N_1^2 > 1$. 
When $(\Sigma_+,N_1)=(\pm 1,0)$, we obtain the flow \eqref{Neqn} in the boundary set $\mathrm{LRS}^\pm$ of $\mathcal{LRS}_1$, defined in \eqref{LRS}. 
Next, we describe the asymptotics of these equations; see Figure~\ref{fig:LRS3d}.
\begin{figure}[H]
    \centering
    \begin{minipage}{0.32\textwidth}
        \centering
        \boxed{\footnotesize{\text{$v < \tfrac12$}}}\\[0.25cm]
        \includegraphics[width=\textwidth]{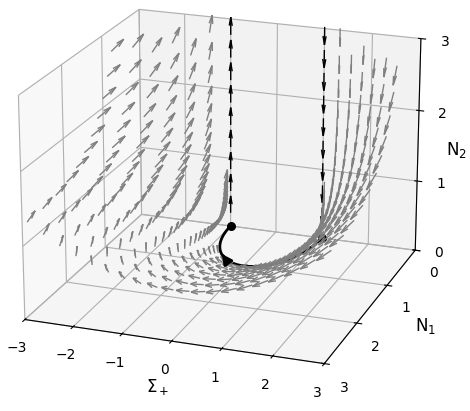}
    \end{minipage}
    %
    %
    \begin{minipage}{0.32\textwidth}
        \centering
        \boxed{\footnotesize{\text{$v =  \tfrac12$}}}\\[0.25cm]
        \includegraphics[width=\textwidth]{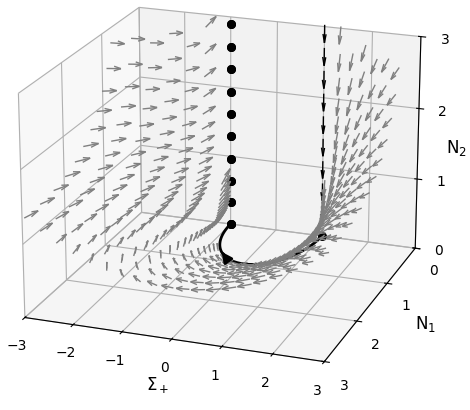}
    \end{minipage}
    %
    %
    \begin{minipage}{0.32\textwidth}
        \centering
        \boxed{\footnotesize{\text{$v >  \tfrac12$}}}\\[0.25cm]
        \includegraphics[width=\textwidth]{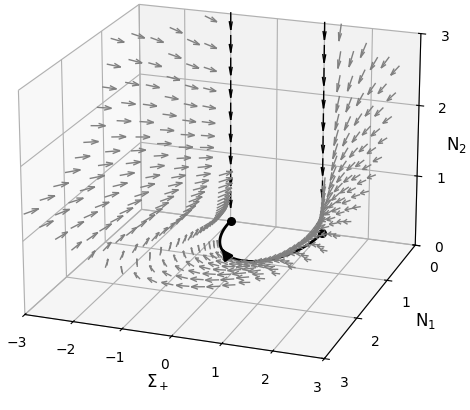}
    \end{minipage}
    \caption{The constrained two-dimensional dynamics of $\mathcal{LRS}_1$ (gray) described by the equation \eqref{LRSIXeq} and its corresponding boundary sets (bold): the lines $\mathrm{LRS}^\pm$ in \eqref{LRS} and the heteroclinic from $\mathrm{Q}_1$ to $\mathrm{T}_1$ within $\BIIa$ in \eqref{BII_N_1}. 
    The dynamics of each case (subcritical, critical or supercritical) is described in Lemma~\ref{lem:LRS3d}
    }
    \label{fig:LRS3d}
\end{figure}
\begin{lem} \label{lem:LRS3d}
    In $\mathcal{LRS}_1\subseteq \mathbb{R}^5$, the $\omega$-limit sets (in $\tau_-$) of all solutions are given as follows: 
    \begin{itemize}
        \item[(i)] When $v \in (0,1/2)$, all solutions blow-up according to $\lim_{\tau_-\to\infty} (\Sigma_+,N_1,N_2) =(-1,0,\infty)$.
        \item[(ii)] When $v =1/2$, all solutions converge to an equilibrium in the line $\mathrm{LRS}^-$ given by~\eqref{LRS}.
        \item[(iii)] When $v \in (1/2,1)$, all solutions converge to $\mathrm{T}_1$, given by $(\Sigma_+,N_1,N_2)=(-1,0,0)$. 
    \end{itemize}
    Moreover, in $\mathcal{LRS}_1\subseteq \mathbb{R}^5$, the $\alpha$-limit sets (in $\tau_-$) of all solutions are given as follows: 
    \begin{itemize}
        \item[(i)] When $v \in (0,1/4]$, there is one solution that blows-up (backwards in time) as in $\lim_{\tau_-\to-\infty} (\Sigma_+,N_1,N_2) =(-1/(4v), 0 , \infty)$, which is a separatrix: on one side, all solutions converge (backwards) to  $\mathrm{Q}_1$, given by $(\Sigma_+,N_1,N_2)=(1,0,0)$; whereas on the other side, all solutions blow-up (backwards) as in $\lim_{\tau_-\to-\infty} (\Sigma_+,N_1,N_2) =(\infty, \infty , \infty)$.
        \item[(ii)] When $v \in (1/4,1)$, all solutions satisfy $\lim_{\tau_-\to\infty} (\Sigma_+,N_1,N_2) =(\infty, \infty , \infty)$.
    \end{itemize}
%
        %
%
%
%
%
\end{lem}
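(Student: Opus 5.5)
The proof reduces to a planar analysis of the decoupled $(\Sigma_+,N_1)$ subsystem \eqref{LRSIXeq1}--\eqref{LRSIXeq2}, restricted to the invariant region $R:=\{N_1>0,\ \Sigma_+^2+N_1^2>1\}$. On this region the constraint gives $N_2=(\Sigma_+^2+N_1^2-1)/(4N_1)>0$, so once $(\Sigma_+,N_1)$ is known, $N_2$ is recovered algebraically, and its asymptotics can be cross-checked with \eqref{LRSIXeq3}.

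\textbf{Forward limits.} I would first note that $\Sigma_+'\le 4v(1-\Sigma_+^2)(\Sigma_+-\tfrac{1}{4v})$. In $R$, at a point with $\Sigma_+\ge 1$, we get $\Sigma_+'<0$ whenever $N_1>0$, because the cubic is nonpositive there and the term $-3N_1^2$ is strictly negative. For $\Sigma_+\ge 1$ one also has $N_1'=-4v\Sigma_+(\Sigma_+-1/v)N_1$. I then want monotone behaviour, so I would try a Dulac or Lyapunov argument. The natural candidate is $\Sigma_+$ itself. On the circle $\Sigma_+^2+N_1^2=1$ we have $\Sigma_+'=-3N_1^2+\dots$, and off the circle I expect $\Sigma_+'<0$ on all of $R$ once $\Sigma_+\ge -1$ fails to be a stationary level. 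Concretely, $\Sigma_+'=0$ defines the curve $3N_1^2=4v(1-\Sigma_+^2)(\Sigma_+-\tfrac1{4v})$. This curve is nonempty only for $\Sigma_+\in[\tfrac1{4v},1]$ (or for $\Sigma_+<-1$), and in either range it lies inside or at the unit disk. I would therefore check that it does not meet $R$ except possibly in the region $\Sigma_+<-1$. For $\Sigma_+>-1$ this gives $\Sigma_+'<0$ throughout $R$. Hence $\Sigma_+$ decreases, and orbits either cross into $\Sigma_+<-1$ or converge to $\Sigma_+=-1$. Near $\Sigma_+=-1$ the factor in $N_1'$ is $-4v\cdot(-1)(-1-1/v)=-4(v+1)<0$, so $N_1\to0$. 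Then $N_2$ obeys $N_2'\approx -4(v-\tfrac12)N_2$ from \eqref{Neqn} with the minus sign, and this yields the trichotomy: blow-up for $v<1/2$, convergence to a point of $\mathrm{LRS}^-$ for $v=1/2$, and convergence to $\mathrm T_1$ for $v>1/2$. For $v=1/2$ I would integrate the exponentially decaying right-hand side to get convergence to a single equilibrium rather than mere accumulation. I also need to rule out orbits that go to $\Sigma_+<-1$ and escape. There the constraint forces $N_1^2<$ ... which still allows escape in principle, so I would use $\Delta'=-8v\Sigma^2\Delta\to$ with $\Delta=3(N_1N_2^2)^{2/3}$, together with Lemma~\ref{lem:bdd}'s bound $\Sigma^2\le1+\Delta$. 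Since $\Delta$ decays exponentially while $\Sigma_+^2\ge1$, we get $\Sigma_+^2\to$ at most $1$, which excludes $\Sigma_+<-1-\epsilon$ asymptotically.

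\textbf{Backward limits.} Reversing time, $\Sigma_+$ increases, and $\Delta$ grows exponentially, so at least one variable is unbounded. Near $N_1=0$, the only candidate finite limits are the equilibria of the planar system: $(1,0)=\mathrm Q_1$, $(-1,0)$, and $(\tfrac1{4v},0)$. The last lies in $R$ only if $\tfrac1{4v}\ge1$, i.e. $v\le 1/4$. For $v\le1/4$ the point $(\tfrac1{4v},0)$ is a saddle of the planar system, and its one-dimensional unstable branch inside $R$ is the separatrix along which $N_2=(\Sigma_+^2-1)/(4N_1)\to\infty$. Orbits on one side tend backward to $\mathrm Q_1$, and on the other side they tend to infinity. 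For $v>1/4$ there is no such equilibrium, and $\mathrm Q_1$ is not accessible from $R$ backward (the type $\mathrm{II}$ boundary arc leaves it). In that case monotonicity gives $\Sigma_+\to\infty$, and then $N_1'\sim -4v\Sigma_+^2N_1$ backward forces $N_1\to\infty$, with $N_2\to\infty$ following from the constraint.

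\textbf{Main obstacle.} The main obstacle is the global sign analysis of $\Sigma_+'$ on $R$ together with the exclusion of escape or oscillation near $\Sigma_+=-1$ from below. I would handle this first by the nullcline computation, and then by the $\Delta$-estimate from Lemma~\ref{lem:bdd}.
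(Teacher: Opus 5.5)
\textbf{Backward part: a step fails.} You use ``reversing time, $\Sigma_+$ increases'' on all of $R$. But $\Sigma_+'<0$ holds only on $R\cap\{\Sigma_+\ge -1\}$, and only for $v\ge 1/4$. On $R\cap\{\Sigma_+<-1\}$ the cubic $(1-\Sigma_+^2)(4v\Sigma_+-1)$ is positive, so $\Sigma_+'>0$ near $N_1=0$, and orbits there need not leave this region backward.

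Set $x=-\Sigma_+$, $y=N_1$, $s=-\tau_-$. Then $dx/ds=4vx^3+x^2-4vx-1-3y^2$ and $dy/ds=4x(vx+1)y$. For $c^2<4v/9$ and $X_0$ large, the set $\{x>X_0,\ 0<y<cx^{3/2}\}$ is invariant under the backward flow, and $dx/ds\ge\kappa x^3$ on it for some $\kappa>0$. Hence an open set of $\mathcal{LRS}_1$ solutions reaches $(\Sigma_+,N_1,N_2)=(-\infty,\infty,\infty)$ in finite backward time.

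So ``monotonicity gives $\Sigma_+\to\infty$'' is not just unproved; it is false for these orbits. What survives is $|\Sigma_+|\to\infty$, with $\Sigma_+\to+\infty$ only for orbits that ever meet $\{\Sigma_+\ge-1\}$. All these blow-ups are in finite backward time, since $|\Sigma_+'|\gtrsim|\Sigma_+|^3$, not limits as $\tau_-\to-\infty$. The paper's proof makes the same unsupported assertion (``contained in $\Sigma_+>0$ for $\tau_-\ll0$''), so you cannot borrow the step from there; the backward claim itself has to be weakened.

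Likewise, for $v<1/4$ the orbits on the near side of the separatrix do not tend to $\mathrm{Q}_1=(1,0,0)$. By \eqref{LRSIXeq3}, $N_2'/N_2\to-(4v+2)$ as $\Sigma_+\to1$, so $N_2\to\infty$ backward. These orbits go to the point at infinity of $\mathrm{LRS}^+$, which agrees with the last sentence of the paper's proof. On the constrained surface, $\mathrm{Q}_1$ is a saddle whose unstable manifold is the type~II arc. Also, at $v=1/4$ the points $(1/(4v),0)$ and $(1,0)$ merge into a saddle-node, so your saddle/separatrix description needs $v<1/4$.

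\textbf{Forward part: different route, one error.} Your route differs from the paper's, which uses Dulac in region (a), monotonicity in (b)--(d), then Poincar\'e--Bendixson with boundedness taken from Lemma~\ref{lem:bdd}; that lemma bounds $N_\alpha$ only for $v>1/2$. You instead argue as follows. $R\cap\{\Sigma_+<-1\}$ is forward invariant, since $\Sigma_+'=-3N_1^2$ on $\Sigma_+=-1$. There $\Delta$ decays exponentially, and $\Sigma_+^2\le 1+\Delta$ then forces $\Sigma_+\to-1$. This needs no a priori bound on $N_1$, which is a real advantage.

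However, your nullcline claim is false for $v<1/4$. The cubic is positive on $(1,1/(4v))$, which lies in $R$, so $\Sigma_+'>0$ there near $N_1=0$. Repair it as the paper does for region (c): on that strip $N_1'/N_1=4\Sigma_+-4v\Sigma_+^2\ge 3$, while $\Sigma_+'=-3N_1^2<0$ on both edges. So every orbit leaves through $\Sigma_+=1$ in finite time and never returns.

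You also need to exclude a monotone limit $L>-1$ when the orbit stays in $\{\Sigma_+\ge-1\}$:
\begin{itemize}
\item $L\in(-1,1)$ is ruled out by $\Sigma_+'<4(1-\Sigma_+^2)(v\Sigma_+-1)$, which is valid on all of $R$.
\item Any other $L$ is either a point where the cubic is strictly negative, or a root $1$ or $1/(4v)$ of the cubic. At those roots $N_1'/N_1>0$, so $N_1$ grows and drives $\Sigma_+'$ negative.
\end{itemize}
With these fixes the forward trichotomy is correct, including your integrability argument for $v=1/2$ at the hyperbolic sink $(-1,0)$.
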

\begin{pf}
For $N_1=0$, the constraint implies $\Sigma_+=\pm 1$ and thus we obtain the $\mathrm{LRS}^\pm$ subsets defined in \eqref{LRS}, which are the boundary of the subset $\mathcal{LRS}_1$. 
For $N_1>0$, we solve the constraint through 
\begin{equation}\label{N2}
    N_2=\frac{\Sigma_+^2-1}{4N_1}+\frac{N_1}{4}    
\end{equation}
and we obtain the decoupled two-dimensional flow given by \eqref{LRSIXeq1}-\eqref{LRSIXeq2}, which amounts to a skew-product flow when coupling with \eqref{LRSIXeq3}. 

We now describe the asymptotic dynamics of the planar system \eqref{LRSIXeq1}-\eqref{LRSIXeq2}, with vector field plotted in Figure~\ref{fig:Planar_flow}, and its implications for the constrained surface of the three-dimensional phase-space of \eqref{LRSIXeq}, which is obtained through \eqref{N2}.
Even though the constraint forces $\Sigma_{+}^2 + N_1^2 > 1$, we analyze the whole phase-space, as it indicates when qualitative changes occur as the parameter changes.

%
Note that there are three fixed points of \eqref{LRSIXeq1}-\eqref{LRSIXeq2}, 
which occur when $N_1 = 0$:
\begin{equation}
    \Sigma_+ = \pm 1 \qquad \text{or} \qquad \Sigma_+ = \frac{1}{4v}.
\end{equation}
%
%
We analyze the stability of equilibria by linearizing equations \eqref{LRSIXeq1}-\eqref{LRSIXeq2} and computing the associated eigenvalues. 
%
%
The eigenvalues of the equilibrium $(\Sigma_+, N_1) =(-1,0)$ are $\{-4(1+v), -2(1+4v)\}$, which implies this point is asymptotically stable for all $v\in (0,1)$. 
For the equilibrium $(\Sigma_+, N_1) = (1,0)$, the eigenvalues are $\{4(1-v), 8(1/4-v)\}$, and thus there is one positive eigenvalue for all $v\in (0,1)$ and another that is positive for $v \in (0,1/4)$, zero for $v=1/4$, and negative for $v \in (1/4,1)$. 
Similarly, the point $(\Sigma_+, N_1) =\left(1/(4v), 0\right)$ has eigenvalues $\left\{3/(4v), 4(v^2 - 1/16)/v\right\}$, and thus there is one positive eigenvalue for all $v\in (0,1)$ and another that is negative for $v \in (0,1/4)$, zero for $v=1/4$, and positive for $v \in (1/4,1)$. 
This occurs due to a transcritical bifurcation of the last two points at $v=1/4$, which indicates the intrusion of a fixed point that does not satisfy the constraint to the constrained phase-space. 
See Figure~\ref{fig:Planar_flow}. 

For both equilibria, $(\Sigma_+, N_1) =\left(1/(4v), 0\right)$ and $\left(1, 0\right)$, and all $v\in (0,1)$, 
the eigenvalue that changes sign has an eigendirection corresponding to the invariant set $N_1=0$ with flow $\Sigma_{+}' = 4v(1-\Sigma_{+}^2)(\Sigma_{+} - 1/(4v))$, and thus such transcritical bifurcation occurs within the invariant subspace $N_1=0$. 
The positive eigenvalue has an eigendirection with $N_1>0$, and thus these fixed points have an unstable manifold pointing in this direction.
See Figure~\ref{fig:Planar_flow}.
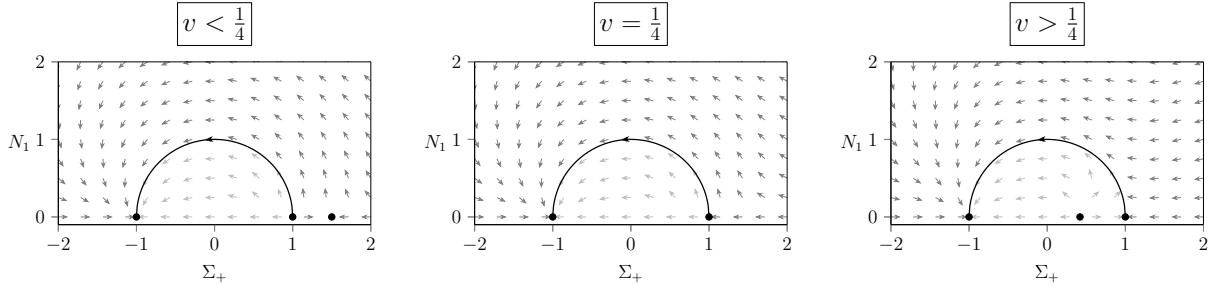
\begin{figure}[H]
    \centering
    \begin{tikzpicture}[scale=0.6]
        \begin{axis}[
            xmin = -2, xmax = 2,
            ymin = -0.1, ymax = 2,
            zmin = 0, zmax = 1,
            axis equal image,
            view = {0}{90},
            xlabel = $\Sigma_{+}$, 
            ylabel = $N_1$, 
            ylabel style = {
                rotate=270,
                at={(axis description cs:-0.125,0.5)}, 
                anchor=center}, 
            xtick = {-2.0, -1.0, 0,  1.0, 2.0},
            xtick pos=bottom,  
            ytick pos=left,    
            tick align=outside, 
            title = {\boxed{\Large{\text{$v <  \tfrac14$}}}},
            ]
            \addplot3[
                quiver={
                    u = {((1 - x^2)*(4*0.165*x - 1) - 3*y^2 ) /
                         sqrt( ((1 - x^2)*(4*0.165*x - 1) - 3*y^2)^2 +
                               (-4*(0.165*x - 1)*x*y)^2 + 1e-6 )},
                    v = { ( -4*(0.165*x - 1)*x*y ) /
                         sqrt( ((1 - x^2)*(4*0.165*x - 1) - 3*y^2)^2 +
                               (-4*(0.165*x - 1)*x*y)^2 + 1e-6 )},
                    scale arrows = 0.12
                },
                draw=gray,
                -{Stealth[length=3.5pt]},
                domain=-2:2,
                domain y=0:2,
                samples=15,
                samples y=9,
                restrict expr to domain={x^2 + y^2}{1:1000}, 
            ]
            {0};
            
            \addplot3[
                quiver={
                    u = {((1 - x^2)*(4*0.165*x - 1) - 3*y^2 ) /
                         sqrt( ((1 - x^2)*(4*0.165*x - 1) - 3*y^2)^2 +
                               (-4*(0.165*x - 1)*x*y)^2 + 1e-6 )},
                    v = { ( -4*(0.165*x - 1)*x*y ) /
                         sqrt( ((1 - x^2)*(4*0.165*x - 1) - 3*y^2)^2 +
                               (-4*(0.165*x - 1)*x*y)^2 + 1e-6 )},
                    scale arrows = 0.12
                },
                draw=lightgray,
                -{Stealth[length=3.5pt]},
                domain=-2:2,
                domain y=0:2,
                samples=15,
                samples y=9,
                restrict expr to domain={x^2 + y^2}{0:1}, 
            ]
            {0};

            \addplot+[
                only marks,
                mark=*,
                mark size=2pt,
                mark options={fill=black, draw=black}
            ] coordinates {
                (-1.00, 0)
                (1.00, 0)
                (1.5, 0) 
            };
            
        \node at (axis cs:1.45,0) [circle, draw=none, fill=white, minimum size=0.5pt] {};;

        \addplot[
            domain=1:-1,      
            samples=200,
            thick,
            black,
            postaction={
                decorate,
                decoration={
                    markings,
                    mark=at position 0.54 with {\arrow{Stealth[length=5pt]}}
                }
            }
        ]
        ({x}, {sqrt(1 - x^2)});
        
        \end{axis}
    \end{tikzpicture}
    \hspace{0.2cm}
    \begin{tikzpicture}[scale=0.6]
        \begin{axis}[
            xmin = -2, xmax = 2,
            ymin = -0.1, ymax = 2,
            zmin = 0, zmax = 1,
            axis equal image,
            view = {0}{90},
            xlabel = $\Sigma_{+}$, 
            ylabel = $N_1$, 
            ylabel style = {
                rotate=270,
                at={(axis description cs:-0.125,0.5)}, 
                anchor=center}, 
            xtick = {-2.0, -1.0, 0, 1.0, 2.0},
            xtick pos=bottom,  
            ytick pos=left,    
            tick align=outside, 
            title = {\boxed{\Large{\text{$v =  \tfrac14$}}}},
        ]
        
        \addplot3[
            quiver={
                u = {((1 - x^2)*(4*0.25*x - 1) - 3*y^2 ) /
                     sqrt( ((1 - x^2)*(4*0.25*x - 1) - 3*y^2)^2 +
                           (-4*(0.25*x - 1)*x*y)^2 + 1e-6 )},
                v = { ( -4*(0.25*x - 1)*x*y ) /
                     sqrt( ((1 - x^2)*(4*0.25*x - 1) - 3*y^2)^2 +
                           (-4*(0.25*x - 1)*x*y)^2 + 1e-6 )},
                scale arrows=0.12
            },
            draw=lightgray,
            -{Stealth[length=3.5pt]},
            domain=-2:2,
            domain y=0:2,
            samples=15,
            samples y=9,
            restrict expr to domain={x^2 + y^2}{0:1},
        ]{0};
        
        \addplot3[
            quiver={
                u = {((1 - x^2)*(4*0.25*x - 1) - 3*y^2 ) /
                     sqrt( ((1 - x^2)*(4*0.25*x - 1) - 3*y^2)^2 +
                           (-4*(0.25*x - 1)*x*y)^2 + 1e-6 )},
                v = { ( -4*(0.25*x - 1)*x*y ) /
                     sqrt( ((1 - x^2)*(4*0.25*x - 1) - 3*y^2)^2 +
                           (-4*(0.25*x - 1)*x*y)^2 + 1e-6 )},
                scale arrows=0.12
            },
            draw=gray,
            -{Stealth[length=3.5pt]},
            domain=-2:2,
            domain y=0:2,
            samples=15,
            samples y=9,
            restrict expr to domain={x^2 + y^2}{1:1000},
        ]{0};

        \addplot+[
            only marks,
            mark=*,
            mark size=2pt,
            mark options={fill=black, draw=black}
        ]   coordinates {
                (-1.00, 0)
                (1.00, 0)
            };


        \addplot[
            domain=1:-1,      
            samples=200,
            thick,
            black,
            postaction={
                decorate,
                decoration={
                    markings,
                    mark=at position 0.54 with {\arrow{Stealth[length=5pt]}}
                }
            }
        ]
        ({x}, {sqrt(1 - x^2)});
        
        \end{axis}
    \end{tikzpicture}
    \hspace{0.2cm}
    \begin{tikzpicture}[scale=0.6]
        \begin{axis}[
            xmin = -2, xmax = 2,
            ymin = -0.1, ymax = 2,
            zmin = 0, zmax = 1,
            axis equal image,
            view = {0}{90},
            xlabel = $\Sigma_{+}$, 
            ylabel = $N_1$, 
            ylabel style = {
                rotate=270,
                at={(axis description cs:-0.125,0.5)}, 
                anchor=center}, 
            xtick = {-2.0, -1.0, 0, 1.0, 2.0},
            xtick pos=bottom,  
            ytick pos=left,    
            tick align=outside, 
            title = {\boxed{\Large{\text{$v >  \tfrac14$}}}},
        ]
        
        \addplot3[
            quiver={
                u = {((1 - x^2)*(4*0.8*x - 1) - 3*y^2 ) /
                     sqrt( ((1 - x^2)*(4*0.8*x - 1) - 3*y^2)^2 +
                           (-4*(0.8*x - 1)*x*y)^2 + 1e-6 )},
                v = { ( -4*(0.8*x - 1)*x*y ) /
                     sqrt( ((1 - x^2)*(4*0.8*x - 1) - 3*y^2)^2 +
                           (-4*(0.8*x - 1)*x*y)^2 + 1e-6 )},
                scale arrows=0.12
            },
            draw=lightgray,
            -{Stealth[length=3.5pt]},
            domain=-2:2,
            domain y=0:2,
            samples=15,
            samples y=9,
            restrict expr to domain={x^2 + y^2}{0:1},
        ]{0};
        
        \addplot3[
            quiver={
                u = {((1 - x^2)*(4*0.8*x - 1) - 3*y^2 ) /
                     sqrt( ((1 - x^2)*(4*0.8*x - 1) - 3*y^2)^2 +
                           (-4*(0.8*x - 1)*x*y)^2 + 1e-6 )},
                v = { ( -4*(0.8*x - 1)*x*y ) /
                     sqrt( ((1 - x^2)*(4*0.8*x - 1) - 3*y^2)^2 +
                           (-4*(0.8*x - 1)*x*y)^2 + 1e-6 )},
                scale arrows=0.12
            },
            draw=gray,
            -{Stealth[length=3.5pt]},
            domain=-2:2,
            domain y=0:2,
            samples=15,
            samples y=9,
            restrict expr to domain={x^2 + y^2}{1:1000},
        ]{0};

        \addplot+[
            only marks,
            mark=*,
            mark size=2pt,
            mark options={fill=black, draw=black}
        ] coordinates {
            (-1.00, 0)
            (1.00, 0)
            (0.42, 0)
        };

        \addplot[
            domain=1:-1,      
            samples=200,
            thick,
            black,
            postaction={
                decorate,
                decoration={
                    markings,
                    mark=at position 0.54 with {\arrow{Stealth[length=5pt]}}
                }
            }
        ]
        ({x}, {sqrt(1 - x^2)});        
        \end{axis}
    \end{tikzpicture}
    
    \caption{The qualitative description of the projected two-dimensional dynamics of the $\mathcal{LRS}_1$ subsets given by the equations \eqref{LRSIXeq1}-\eqref{LRSIXeq2}.  
    Note that the constraint forces $\Sigma_{+}^2 + N_1^2 > 1$, where its boundary consists of $\Sigma_{+}^2 + N_1^2 = 1$, which amounts to the heteroclinic from $\mathrm{Q}_1$ to $\mathrm{T}_1$ within $\BIIa$ and the fixed points $(\Sigma_+, N_1) =\left(\pm 1, 0\right)$, which are the projected lines $\mathrm{LRS}^\pm$.
    There is a third fixed point, $(\Sigma_+, N_1) =\left(1/(4v), 0\right)$, which only lies within the constrained phase-space when $v<1/4$.
    }
    \label{fig:Planar_flow}
\end{figure}

Next, we show there are no periodic orbits of \eqref{LRSIXeq1}-\eqref{LRSIXeq2} for any $v\in (0,1)$. 
Indeed, we split the phase-space in regions, denoted by (a), (b), (c), (d), which are delimited by the isoclines, given by $\Sigma_+=-1,1/(4v),+1$, see Figure~\ref{fig:isoclines}. There are three cases:
\begin{subequations}
\begin{alignat}{2}
\boxed{\footnotesize{\text{$v < \tfrac14$}}}\quad 
&\text{(a)}\; \Sigma_+ < -1,\;
\text{(b)}\; \Sigma_+ \in [-1,1],\;
\,\,\,\,\,\text{(c)}\; \Sigma_+ \in \left(1,\tfrac{1}{4v}\right),\;
&\,\,\,\text{(d)}\; \Sigma_+ \ge \tfrac{1}{4v}, \\[6pt]
\boxed{\footnotesize{\text{$v = \tfrac14$}}}\quad 
&\text{(a)}\; \Sigma_+ < -1,\;
\text{(b)}\; \Sigma_+ \in [-1,1],\;
&\text{(d)}\; \Sigma_+ \ge 1, \\[6pt]
\boxed{\footnotesize{\text{$v > \tfrac14$}}}\quad 
&\text{(a)}\; \Sigma_+ < -1,\;
\text{(b)}\; \Sigma_+ \in \left[-1,\tfrac{1}{4v}\right],\;
\text{(c)}\; \Sigma_+ \in \left(\tfrac{1}{4v},1\right),\;
&\text{(d)}\; \Sigma_+ \ge 1.
\end{alignat}
\end{subequations}
%
%
%
%
There cannot be a periodic orbit in more than two regions of  (a), (b), (c), (d), as the flow is monotone, $\Sigma_+' <0$, at each of the isoclines $\Sigma_+=\pm 1,1/(4v)$ for all $v\in (0,1)$.
%
Next we discuss each of the regions, see Figure~\ref{fig:isoclines}. There is no periodic orbit in (b) or (d), since the vector field is monotone: $\Sigma_+' <0$. 
Similarly, there is no periodic orbit in (c), since the vector field is monotone\footnote{As a matter of fact, equation \eqref{LRSIXeq1} amounts to $\Sigma'_{+} <  4(1 - \Sigma_{+}^2) \left(v\Sigma_{+} - 1\right)<0$ within the constraint $\Sigma_{+}^2 + N_1^2 > 1$ in the region (c).}: $N_1' >0$.
For (a), let $f(\Sigma_+,N_1)$ denote the vector field of \eqref{LRSIXeq1}-\eqref{LRSIXeq2}, hence the Dulac-Bendixson criterion implies that there are no periodic orbits in this region: 
\begin{equation}\label{Dulac}
    \text{div}\left(\frac{1}{1 - \Sigma_+^2} f(\Sigma_+,N_1)\right ) = 4v - \frac{6N_1^2\Sigma_+}{(1 + \Sigma_+^2)^2} - \frac{4\Sigma_+(v\Sigma_+ -1)}{1 - \Sigma_+^2} > 0,
\end{equation}
for $N_1 >0$ and $\Sigma_+ < -1 < 1/v$ for $v\in (0,1)$; and thus all terms in \eqref{Dulac} are positive. 
%
%
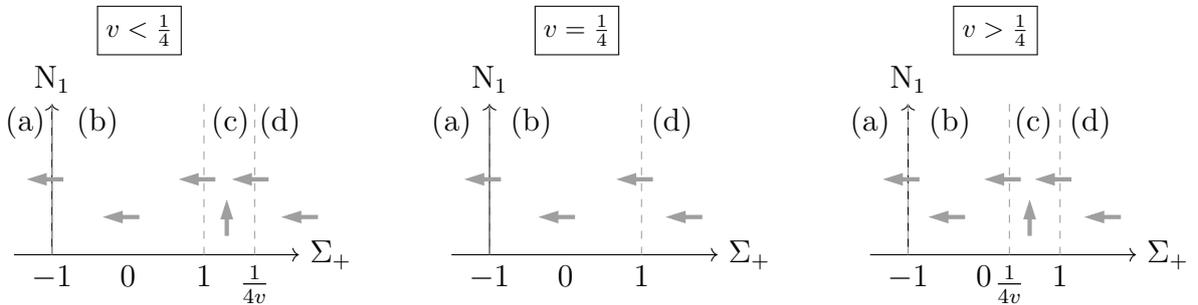
\begin{figure}[H]
    \centering
    \begin{tikzpicture}[scale = 0.5]
        \draw[->] (-3,0) -- (4.5,0) node[right] {$\Sigma_+$};
        \draw[->] (-2,0) -- (-2, 4) node[above] {N$_1$};
        \draw (-2,0) node[below]{$-1$} (0,0) node[below]{ $0$} (2,0) node[below]{$1$} (4-2/3,0) node[below]{$\frac{1}{4v}$};
        \draw (4,3.5) node {({d})};
        \draw (2.7,3.5) node {({c})};
        \draw (-0.8,3.5) node {({b})};
        \draw (-2.7,3.5) node {({a})};
        
        \draw[gray!75,dashed] (4-2/3,0) -- (4-2/3,4);
        \draw[gray!75,dashed] (2,0) -- (2,4);
        \draw[gray!75,dashed] (-2,0) -- (-2,4);
        

        \draw[arrows = {-Stealth[inset=0pt, length=8pt, angle'=30]},ultra thick,gray!75, ] (3.7,2)--(2.7,2);
        
        \draw[arrows = {-Stealth[inset=0pt, length=8pt, angle'=30]},ultra thick,gray!75, ] (5,1)--(4,1);

        \draw[arrows = {-Stealth[inset=0pt, length=8pt, angle'=30]},ultra thick,gray!75,](3-0.4,0.5) -- (3-0.4,1.5);
        
        \draw[arrows = {-Stealth[inset=0pt, length=8pt, angle'=30]},ultra thick,gray!75, ] (-1.7,2)--(-2.7,2);


        \draw[arrows = {-Stealth[inset=0pt, length=8pt, angle'=30]},ultra thick,gray!75, ] (2.3,2)--(1.3,2);

        \draw[arrows = {-Stealth[inset=0pt, length=8pt, angle'=30]},ultra thick,gray!75, ] (0.3,1)--(-0.7,1);

        \node[above,font=\footnotesize\bfseries] at (0.3,5) {\boxed{\text{$v <  \tfrac14$}}};
    \end{tikzpicture}    
    \hspace{0.5cm}
    \begin{tikzpicture}[scale = 0.5]
        \draw[->] (-3,0) -- (4,0) node[right] {$\Sigma_+$};
        \draw[->] (-2,0) -- (-2, 4) node[above] {N$_1$};
        \draw (-2,0) node[below]{$-1$} (0,0) node[below]{ $0$} (2,0) node[below]{$1$};
        \draw[white] (4-2/3,0) node[below]{$\frac{1}{4v}$};
        \draw (2.8,3.5) node {(d)};
        \draw (-0.9,3.5) node {(b)};
        \draw (-3,3.5) node {(a)};
        
        \draw[gray!75,dashed] (2,0) -- (2,4);
        \draw[gray!75,dashed] (-2,0) -- (-2,4);
        

        \draw[arrows = {-Stealth[inset=0pt, length=8pt, angle'=30]},ultra thick,gray!75, ] (2.3,2)--(1.3,2);
        
        \draw[arrows = {-Stealth[inset=0pt, length=8pt, angle'=30]},ultra thick,gray!75, ] (3.6,1)--(2.6,1);

       \draw[arrows = {-Stealth[inset=0pt, length=8pt, angle'=30]},ultra thick,gray!75, ] (0.25,1)--(-0.75,1);
        
        \draw[arrows = {-Stealth[inset=0pt, length=8pt, angle'=30]},ultra thick,gray!75, ] (-1.7,2)--(-2.7,2);

        \node[above,font=\footnotesize\bfseries] at (0.3,5) {\boxed{\text{$v =  \tfrac14 $}}};
    \end{tikzpicture}
    \hspace{0.5cm}
    \begin{tikzpicture}[scale = 0.5]
        \draw[->] (-3,0) -- (4,0) node[right] {$\Sigma_+$};
        \draw[->] (-2,0) -- (-2, 4) node[above] {N$_1$};
        \draw (-2,0) node[below]{$-1$} (0,0) node[below]{ $0$} (2/3,0) node[below]{$\frac{1}{4v}$} (2,0) node[below]{$1$};
        \draw (2.8,3.5) node {(d)};
        \draw (1.3,3.5) node {(c)};
        \draw (-0.9,3.5) node {(b)};
        \draw (-3,3.5) node {(a)};
        
        \draw[gray!75,dashed] (2,0) -- (2,4);
        \draw[gray!75,dashed] (2/3,0) -- (2/3,4);
        \draw[gray!75,dashed] (-2,0) -- (-2,4);
        

        \draw[arrows = {-Stealth[inset=0pt, length=8pt, angle'=30]},ultra thick,gray!75, ] (2.3,2)--(1.3,2);
        
        \draw[arrows = {-Stealth[inset=0pt, length=8pt, angle'=30]},ultra thick,gray!75, ] (3.6,1)--(2.6,1);

        \draw[arrows = {-Stealth[inset=0pt, length=8pt, angle'=30]},ultra thick,gray!75,](1.2,0.5) -- (1.2,1.5);
        
        \draw[arrows = {-Stealth[inset=0pt, length=8pt, angle'=30]},ultra thick,gray!75, ] (-0.5,1)--(-1.5,1);

        \draw[arrows = {-Stealth[inset=0pt, length=8pt, angle'=30]},ultra thick,gray!75, ] (-1.7,2)--(-2.7,2);

        \draw[arrows = {-Stealth[inset=0pt, length=8pt, angle'=30]},ultra thick,gray!75, ] (2/3+0.3,2)--(-1/3+0.3,2);


        \node[above,font=\footnotesize\bfseries] at (0.3,5) {\boxed{\text{$v >  \tfrac14$}}};
    \end{tikzpicture}
    \caption{To prove that periodic orbits do not exist, we split of the phase-space of equations \eqref{LRSIXeq1}-\eqref{LRSIXeq2} by the isoclines $\Sigma_+=-1,1/(4v),+1$, where $\Sigma_+'<0$. In the regions (b) and (d), the flow is monotone due to $\Sigma_+'<0$. Similarly, in the region (c) monotonicity entails due to $N_1'>0$. In region (a), we use the Dulac-Bendixson criterion to exclude periodic orbits. 
    }\label{fig:isoclines}
\end{figure}
%
%

As a consequence of the absence of periodic orbits, the Poincaré-Bendixson theorem implies that $\omega$-limit sets are equilibria, heteroclinic cycles or homoclinics; we recall all solutions are bounded, due to Lemma~\ref{lem:bdd}. Note there are no heteroclinic cycles, nor homoclinics, due to the linear structure of fixed points and the known heteroclinics within the invariant set $N_1=0$. 
On one hand, the $\omega$-limit set of all solutions with $N_1>0$ consists of the equilibrium $(\Sigma_+, N_1)=(-1,0)$. 
On the other hand, the $\alpha$-limit set of all bounded solutions with $N_1>0$ consists of the following: 
when $v\in [1/4,1)$, the equilibrium $(\Sigma_+, N_1)=(1/(4v),0)$, except for the heteroclinic between $(\Sigma_+, N_1)=(1,0)$ and $(\Sigma_+, N_1)=(-1,0)$; 
when $v\in (0,1/4)$, the equilibrium $(\Sigma_+, N_1)=(1,0)$, except for the heteroclinic between $(\Sigma_+, N_1)=(1/(4v),0)$ and $(\Sigma_+, N_1)=(-1,0)$.
Note these heteroclinics exist, since in each of the aforementioned case, the equilibria $(\Sigma_+, N_1)=(1/(4v),0)$ and $(\Sigma_+, N_1)=(1,0)$ have a 1d unstable manifold with $N_1>0$ and the flow is dissipative such that all solutions converge to the equilibrium $(\Sigma_+, N_1)=(-1,0)$.
For $N_1 < 0$, analogous results hold due to the system's symmetry.

Lastly, we ascend from the system \eqref{LRSIXeq1}-\eqref{LRSIXeq2} to include the dynamics for the variable $N_2$ in \eqref{LRSIXeq3}. 
Note that $(\Sigma_+, N_1)= \left(1/(4v), 0\right)$ implies $N_2 = \infty$, due to \eqref{N2}, and thereby this point does not lie in the original constrained phase-space $\mathcal{LRS}_1 \subset \mathbb{R}^5$, but at infinity.
Similarly, the points with $(\Sigma_+, N_1)= \left(\pm 1, 0\right)$ and $N_2 = \infty$ correspond with the point at infinity of the lines $\mathrm{LRS}^\pm$.
For $v\in (1/2,1)$, none of these points attract 
a solutions from the $\mathcal{LRS}_1 \subset \mathbb{R}^5$ phase-space, due to dissipativity in Lemma~\ref{lem:bdd}. 
In any case, to discern the behavior of the variable $N_2$, note that the linearization of~\eqref{LRSIXeq3} at $\mathrm{LRS}^-$ amounts to $N_{2}' = 2(1-2v) N_{2}$, and thus in a sufficiently small neighborhood of $\mathrm{LRS}^-$, the variable $N_2$ increases for $v<1/2$, it remains constant for $v=1/2$ and it decreases for $v>1/2$.

For the backwards in time asymptotics (i.e. the $\alpha$-limit sets), 
it is not hard to see that all solutions of the two-dimensional flow given by \eqref{LRSIXeq1}-\eqref{LRSIXeq2} are contained in the negative-invariant set, $\Sigma_+>0$, for sufficiently large negative times (i.e. $\tau_-\ll 0$). More specifically, as $\tau_-\to -\infty$, we have $\Sigma_+\to 1,1/(4v)$ or $\infty$; which in turn respectively dictates that $N_1\to 0$ or $\infty$; and thus $N_2\to \infty$, due to~\eqref{N2}.
\qed
\end{pf}





\subsubsection{Generic Dynamics}

For the generic dynamics, we exclude possible invariant sets as $\omega$-limits, and thereby we conclude what are the actual possible asymptotic limits; see \cite[Lemmata 5.3,5.4]{heiugg09b}.
Note that for the case $v\in (1/2,1)$, the situation is simpler than $v=1/2$. Indeed, one does not have to exclude the point at infinity of each of the invariant lines $\mathrm{LRS}^\pm$, nor a point such that $\Sigma_+=-1/(2v)$ and $N_+=\infty$, since the system is dissipative due to Lemma~\ref{lem:bdd}. However, these points will play a major role in the dynamics for the case $v\in (0,1/2)$. 
%
\begin{lem}\label{lem:generic}
    Consider a non-$\mathcal{LRS}_\alpha,\alpha=1,2,3,$ type $\mathrm{VIII}$ or $\mathrm{IX}$ solution for $v\in (1/2,1)$. Then, its $\omega$-limit set does not contain a point of type $\mathrm{VII}_0$, including $\mathrm{LRS}^\pm$. 
\end{lem}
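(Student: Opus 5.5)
The plan is to argue by contradiction, following the structure of \cite[Lemmata 5.3,5.4]{heiugg09b}. Suppose $\gamma$ is a non-$\mathcal{LRS}_\alpha$ type $\mathrm{IX}$ (or $\mathrm{VIII}$) solution whose $\omega$-limit set $\omega(\gamma)$ contains a point $x$ of type $\mathrm{VII}_0$, say with $N_1=0$ and $N_2N_3>0$.

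By Lemma~\ref{lem:bdd}, $\gamma$ is bounded for $v\in(1/2,1)$. Hence $\omega(\gamma)$ is nonempty, compact, connected and invariant under the flow in both time directions. It is contained in the closure of the type $\mathrm{VII}_0$ subsets by \eqref{Delta0}. By invariance, the whole orbit through $x$ lies in $\omega(\gamma)$, together with the closure of that orbit.

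\textbf{Step 1.} If $x$ is a non-LRS point of $\mathcal{B}_{N_2N_3}$, Proposition~\ref{lem:genVIVIIalpha}(iii) applies. It says that the backward orbit of $x$ has $N_+\to\infty$ as $\tau_-\to-\infty$. This contradicts compactness of $\omega(\gamma)$, since $\omega(\gamma)$ is contained in a bounded set by Lemma~\ref{lem:bdd}.

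\textbf{Step 2.} If $x\in \mathrm{LRS}^\pm$, the flow \eqref{Neqn} on these lines gives $\lim_{\tau_-\to-\infty}N=\infty$ for $v\in(1/2,1)$, on both $\mathrm{LRS}^+$ and $\mathrm{LRS}^-$. The same contradiction with boundedness follows.

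Thus in the supercritical case dissipativity does most of the work. Every type $\mathrm{VII}_0$ orbit, including the LRS lines and the invariant set $\Sigma_+=-1/(2v)$, is unbounded in backward time. So none of them can lie in a compact, fully invariant set.

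The main obstacle is making the boundedness in Lemma~\ref{lem:bdd} quantitative enough that $\omega(\gamma)$ is genuinely compact. The concern is that the orbits in $\omega(\gamma)$ through $x$ must remain in the same bounded region, independently of time. I would verify this by noting that $\omega(\gamma)\subseteq \overline{\{\gamma(\tau_-):\tau_-\ge T\}}$ for every $T$, and that this closure is bounded once Lemma~\ref{lem:bdd} provides a uniform bound on $N_\alpha$ for large $\tau_-$.

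One further point needs checking: the hypothesis ``non-$\mathcal{LRS}_\alpha$'' is not really needed here. By Lemma~\ref{lem:LRS3d}(iii), LRS solutions converge to $\mathrm{T}_1$, so they too avoid type $\mathrm{VII}_0$ points. I would remark on this, and confirm that the argument above does not secretly rely on excluding the LRS solutions.
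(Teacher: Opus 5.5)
Your proposal is correct and follows essentially the same argument as the paper. Boundedness from Lemma~\ref{lem:bdd} makes the $\omega$-limit set compact and invariant, so it would have to contain the entire type $\mathrm{VII}_0$ orbit through the limit point, which is impossible because every such orbit (including $\mathrm{LRS}^\pm$) is unbounded as $\tau_-\to-\infty$ by Proposition~\ref{lem:genVIVIIalpha}; your remark that the non-$\mathcal{LRS}_\alpha$ hypothesis is never actually used is also accurate for the paper's argument.
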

\begin{pf}
Suppose, towards a contradiction, that a non-$\mathcal{LRS}_\alpha,\alpha=1,2,3,$ type $\mathrm{VIII}$ or $\mathrm{IX}$ solution (which is bounded for $\tau_-\in\mathbb{R}_+$ due to Lemma~\ref{lem:bdd}) has an $\omega$-limit point which is of type $\mathrm{VII}_0$ (and thereby not of types $\mathrm{I}$ and $\mathrm{II}$). Due to invariance of the $\omega$-limit, the whole orbit of type $\mathrm{VII}_0$ that passes through this point is contained within such a $\omega$-limit set. 
However, Proposition~\ref{lem:genVIVIIalpha} implies that all solutions of type $\mathrm{VII}_0$ are unbounded as $\tau_- \to -\infty$, which contradicts the fact that the $\omega$-limit set is compact.
\qed
\end{pf}
The above argument shows that all type $\mathrm{IX}$ supercritical solutions converge to the union of Bianchi types $\mathrm{I}$ and $\mathrm{II}$ solutions. 
This is in contrast to GR, where the non-generic sets $\mathcal{LRS}_\alpha$ consist of solutions that converge to the $\mathrm{LRS}$ subset.

\section{Conclusion}\label{sec:conc}

In this work we have analyzed the global asymptotic (forward) dynamics of the Bianchi type~$\mathrm{IX}$ in certain modified gravity theories, given by equation~\eqref{intro_dynsyslambdaR}, with $v>1/2$. 
Using dynamical systems techniques, we have proved that \emph{all} solutions of Bianchi type~$\mathrm{IX}$ converge to an analogue of the Mixmaster attractor, consisting of Bianchi type I solutions (Kasner states) and heteroclinic chains of Bianchi type II solutions. In contrast to GR, generic solutions of these supercritical models converge to a unique stable Kasner state and thus do not display oscillations, whereas a set of measure zero oscillates. In particular, we emphasize that for $v>1/2$, there are no residual sets of small measure which are arbitrarily large (which is the case for the $\mathrm{LRS}$ solutions when $v=1/2$). This greatly simplifies the proof.

Additionally, we have detected a new bifurcation in the locally rotationally symmetric subsets, $\mathcal{LRS}_\alpha$, at the parameter value \(v = 1/4\). This amounts to a transcritical bifurcation at infinity, in which some object from outside the constrained phase-space invades it. 
This transition may also reflect in a reorganization of bounded objects and suggests a corresponding modification of the associated stable manifolds of certain objects (e.g., heteroclinic chains), which may influence the backward asymptotics of nearby trajectories. Although this phenomenon does not affect the supercritical attractor result proved here, it indicates an additional geometric mechanism governing the $\mathcal{LRS}_\alpha$ dynamics that is relevant for subcritical regime.
We emphasize that this threshold, \(v = 1/4\), does not play a role in the lower Bianchi types (e.g. $\mathrm{I}, \mathrm{II}, \mathrm{VI}_0, \mathrm{VII}_0$) and only appears as a parameter value where the intersection of disks where cross-terms grow coincides with the points $\mathrm{Q}_\alpha$ (cf. \cite[Fig. 18]{HellLappicyUggla}). 

For Bianchi type~$\mathrm{VIII}$, with $v>1/2$, the attractor conjecture remains open, as one still needs to exclude the fixed point \( p_{\mathrm{VI}_0} \) as an element of an $\omega$-limit set of a Bianchi type~$\mathrm{VIII}$ generic solution. We know this saddle-point possess a 1d strong-stable manifold, $W^{ss}(p_{\mathrm{VI}})$, and thereby only attract one type~$\mathrm{VIII}$ solution (which is a non-generic subset); see Lemma~\ref{lem:strong-stable-pVI}. 
%
%
%
Numerical analysis suggests that this strong-stable manifold is unbounded (backwards in time); see Figure~\ref{fig:Wss}. 
If this is indeed the case, we would be able to exclude such stable manifold of the global attractor using a similar argument as in Lemma~\ref{lem:generic}. However, we cannot rigorously rule out this behavior with the current analysis, as we would need to rigorously locate such stable manifold.

Moreover, for Bianchi type~$\mathrm{VIII}$ with $v>1/2$, it is still not clear what happens to $W^{ss}(p_{\mathrm{VI}})$ as $v\to 1/2$, besides the fact that $p_{\mathrm{VI}} \to \mathrm{T}_1$. In this limit, some of the coefficients in the expansion described in Lemma~\ref{lem:strong-stable-pVI} vanish, whereas some diverge to infinity. This indicates that such strong-stable manifold does not degenerate only at the Taub point $\mathrm{T}_1$, but in a possibly bigger subset. Numerical explorations suggest that it degenerates at three different sets with different behaviors: first, it follows the type $\mathrm{II}$ heteroclinic orbit to $\mathrm{T}_1$ from $\mathrm{Q}_1$, 
second, it follows a type $\mathrm{II}$ orbit with a unique bump $N_3<<0$,
third, it has a last transient oscillatory behavior until it tracks some trajectory in the $\mathcal{LRS}_\alpha$ subset with growing $N_1 \sim N_2$ and reaches limiting $\alpha$-limit set of the strong-stable manifold; which is the intersection of two disks where cross-terms grow. See Figure~\ref{fig:WssLIMIT}. 

%
\begin{figure}[H]
    \centering
    \includegraphics[scale=0.36]{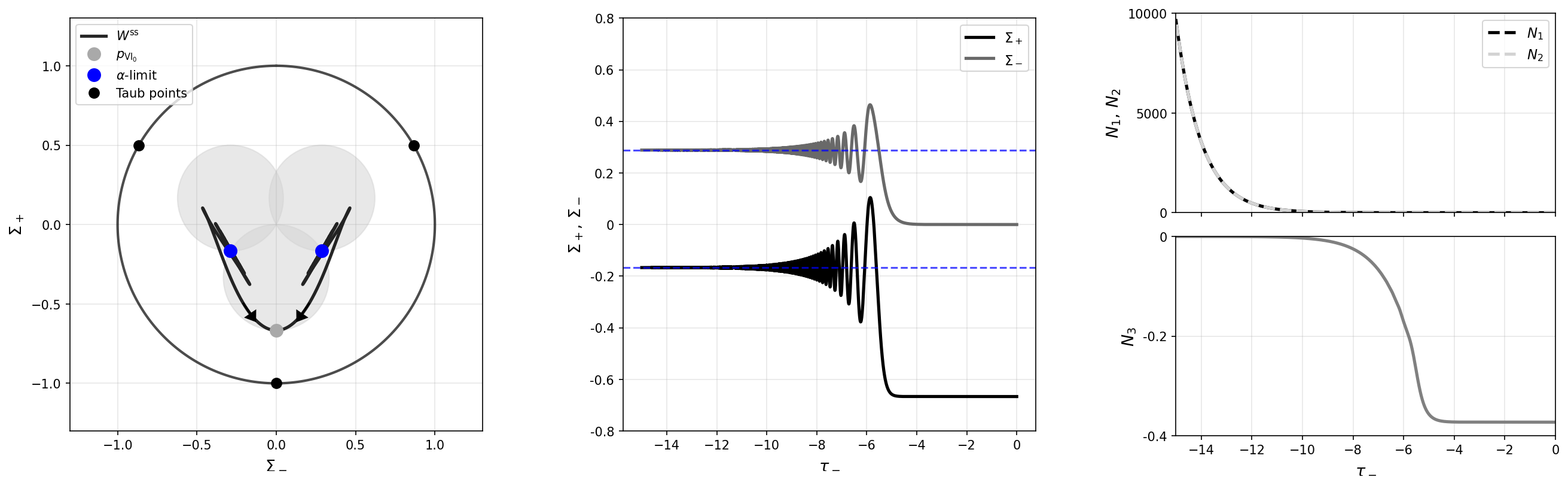}
    \caption{Numerical stable manifold $W^{\mathrm{ss}}(p_{\mathrm{VI}_0})$ for $v = 0.75$ from Lemma~\ref{lem:strong-stable-pVI}, which consists of 
    two symmetric solutions with initial data satisfying \eqref{Wsspviv3} with $\Sigma_-=10^{-10}$ and $\Sigma_-= - 10^{-10}$. 
    \textbf{Left:} Projection of the solutions into the $\Sigma$-plane containing the Kasner circle, the fixed point $p_{\mathrm{VI}_0}$ (gray dot), and the numerical $\alpha$-limit sets (blue dots) of the set $W^{\mathrm{ss}}(p_{\mathrm{VI}_0})$. 
    Here, the light gray disks represent regions where the cross-terms $N_\alpha N_\beta$ grow, see \cite[Eq. 109, Fig. 18]{HellLappicyUggla}; in particular, $p_{\mathrm{VI}_0}$ lies in the extremum of a disk, whereas the $\alpha$-limit sets 
    lie in the intersection of two disks.
    \textbf{Middle and right:} The backwards evolution of the solution with initial data $\Sigma_-=10^{-10}$ with variables $\Sigma_-,\Sigma_+$ (which are bounded) and $N_1,N_2,N_3$ ($N_3$ goes to 0 and $N_1\approx N_2$ grow).
    }
    \label{fig:Wss}
\end{figure}

\begin{figure}[H]
    \centering
    \includegraphics[scale=0.36]{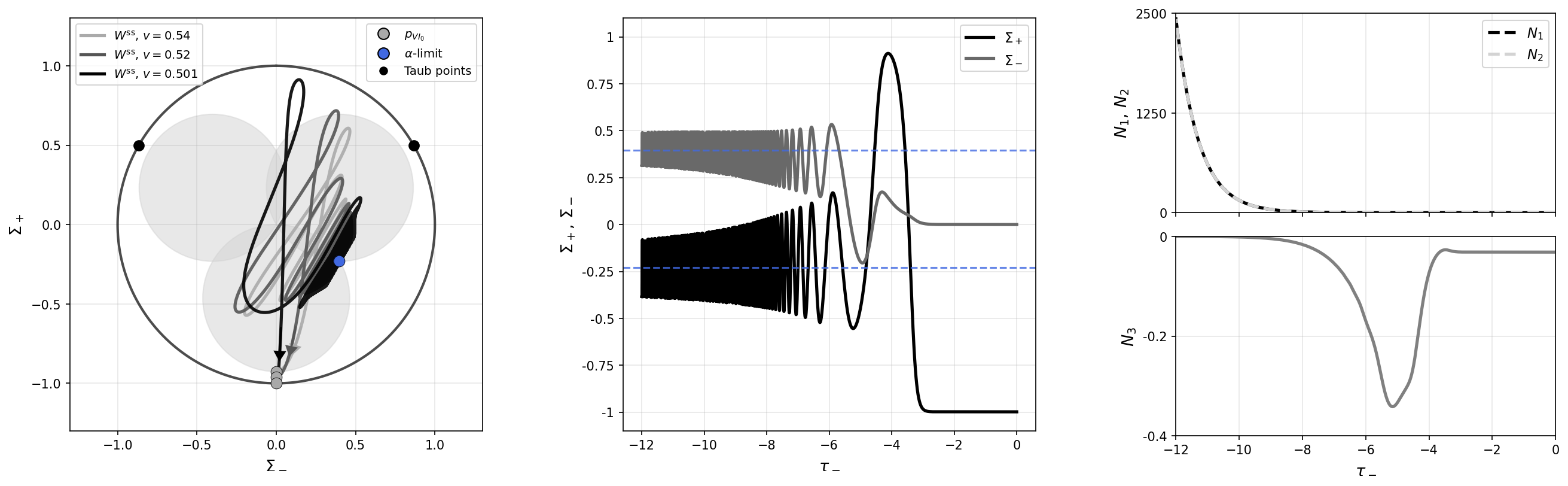}
    \caption{Numerical stable manifold $W^{\mathrm{ss}}(p_{\mathrm{VI}_0})$ as $v \to 1/2$. 
    \textbf{Left:} Projection of the solutions into the $\Sigma$-plane. 
    \textbf{Middle and right:} For $v=0.501$, the backwards evolution of the solution with initial data $\Sigma_-=10^{-10}$ and variables $\Sigma_-,\Sigma_+$ (which are bounded), $N_3$ (which tracks a type~$\mathrm{II}$ or $\mathrm{VI}_0$ trajectory and then goes to 0), and $N_1,N_2$ (which grow following a LRS solution, $N_1\approx N_2$).
    }
    \label{fig:WssLIMIT}
\end{figure}

Most notably, the conjecture attractor remains open in the subcritical regime, \( v<1/2 \), as it exhibits a fundamental different behavior. For this range of parameters, the Taub points become unstable and both invariant subsets $\mathrm{LRS}^\pm$ and $\mathcal{LRS}_\alpha$ consist of blow-up solutions; we thus raise the question if there is a \emph{compact} global attractor. 
In particular, it is currently unclear whether generic solutions converge to a Mixmaster--type attractor (consisting solely of heteroclinics of types $\mathrm{I}, \mathrm{II}$), or more surprisingly, if generic solutions exhibit unbounded growth (and thus there is an \emph{unbounded} global attractor). 
We hope that answering these questions will shine a light in the attractor conjecture for Bianchi type  $\mathrm{VI}_{-1/9}$, see \cite{LappicyUggla}. 





\textbf{Acknowledgments.} 
E. Beatriz was supported by FAPESP, 2023/07941-7 and 2024/05580-0.			
E. Bonotto was supported by FAPESP, 2024/07880-0 and 2020/14075-6, and CNPq, 316169/2023-4.			
P. Lappicy was supported by Marie Skłodowska-Curie Actions, H2020 Cofund, 847635, UNA4CAREER, project DYNCOSMOS. 

\bibliographystyle{plain}

\begin{thebibliography}{90}

\bibitem{Bakas10} I.~Bakas, F.~Bourliot, D.~Lüst, and M.~Petropoulos.
\newblock The {M}ixmaster universe in Hořava–Lifshitz gravity.
\newblock {\it Class. Quantum Grav. } {\bf 27}, 045013, (2010).

\bibitem{Beguin} F.~Béguin.
\newblock Aperiodic oscillatory asymptotic behavior for some Bianchi spacetimes.
\newblock {\it Class.\ Quant.\ Grav.} \textbf{27}, 185005, (2010).

\bibitem{Dutilleul} F. Béguin and T. Dutilleul. 
\newblock Chaotic Dynamics of Spatially Homogeneous Spacetimes. 
\newblock {\it Commun. Math. Phys.} {\bf 399}, 737-927, (2023). 

\bibitem{bkl70} V.~A.~Belinski\v{\i}, I.~M.~Khalatnikov, and E.~M.~Lifshitz.
\newblock Oscillatory approach to a singular point
in the relativistic cosmology.
\newblock {\it Adv.\ Phys.} {\bf 19}, 525, (1970).

\bibitem{bkl82} V.~A.~Belinski\v{\i}, I.~M.~Khalatnikov, and E.~M.~Lifshitz.
\newblock A general solution of the Einstein equations with a time singularity.
\newblock {\it Adv.\ Phys.} {\bf 31}, 639, (1982).

\bibitem{Bernhard}B.~Brehm.
\newblock Bianchi VIII and IX vacuum cosmologies: Almost every solution forms particle horizons and converges to the Mixmaster attractor. \href{https://arxiv.org/abs/1606.08058}{arXiv:1606.08058}, (2016).

\bibitem{Chernoff83} D. F.~Chernoff and J. D.~Barrow.
\newblock Chaos in the Mixmaster Universe.
\newblock {\it Phys. Rev. Lett.} {\bf 50}, 134, (1983).

\bibitem{LappicyLessard} K. E. M. Church, O. Hénot, P. Lappicy, J.-P. Lessard, and H. Sprink.
Periodic orbits in Hořava-Lifshitz cosmologies, \newblock {\it Gen. Rel. Grav.} {\bf 55},  2, (2023).

\bibitem{MG} T.~Clifton, P. G. Ferreira, A. Padilla, and C. Skordis.
\newblock Modified gravity and cosmology.
\newblock {\it Phys.\ Reports} {\bf 513}(1-3), 1-189, (2012).


\bibitem{Coley} A. A.~Coley and G. F. R. Ellis.
\newblock Theoretical cosmology.
\newblock {\it Class.\ Quant.\ Grav.} {\bf 37}, 013001, (2020).

\bibitem{fR} A.~de Felice and S.~Tsujikawa.
\newblock $f(R)$ Theories.
\newblock {\it Living\ Rev.\ Rel.} {\bf 13}, 3, (2010).

\bibitem{Gao} F.~Gao and J.~Llibre.
\newblock Global dynamics of Hořava–Lifshitz cosmology with non-zero curvature and a wide range of potentials.
\newblock {\it The European Phys. J. C} {\bf 80}, 137, (2020).

\bibitem{Kamenshchik} L.~Giani and A.~Y.~Kamenshchik.
\newblock Ho{\v{r}}ava{\textendash}Lifshitz gravity inspired Bianchi-{II} cosmology and the mixmaster universe.
\newblock  {\it Class.\ Quant.\ Grav.} {\bf 34}, 085007, (2017).

\bibitem{heiugg09b} J.~M.~Heinzle and C.~Uggla.
\newblock A new proof of the Bianchi type IX attractor theorem.
\newblock {\it Class.\ Quant.\ Grav.} {\bf 26}, 075015, (2009).

\bibitem{Mixmaster}
J.~M.~Heinzle and C.~Uggla.
\newblock Mixmaster: Fact and Belief.
\newblock {\it Class.\ Quant.\ Grav.} \textbf{26}, 075016, (2009).

\bibitem{HellLappicyUggla} J.~Hell, P.~Lappicy, and C.~Uggla.
Bifurcations and Chaos in Ho{\v{r}}ava-Lifshitz Cosmology, {\it Adv. Theor. Math. Phys.} {\bf 26} 7, 2095–2211, (2022).


\bibitem{hor09a} P.~Ho\v{r}ava.
\newblock Membranes at Quantum Criticality.
\newblock {\it J.\ High Energy Phys.} {\bf 0903}, 020, (2009).

\bibitem{hor09b} P.~Ho\v{r}ava.
\newblock Quantum Gravity at a Lifshitz Point.
\newblock {\it Phys.\ Rev.\ D} {\bf 79}, 084008, (2009).


\bibitem{LappicyDaniel} 
\newblock P. Lappicy and V. H. Daniel.
\newblock Chaos in spatially homogeneous Hořava-Lifshitz subcritical cosmologies.
\newblock {\it Class. Quant. Grav.} \textbf{39} 13, 135017, (2022).

\bibitem{LappicyUggla} P.~Lappicy and C.~Uggla.
Oscillatory spacelike singularities: The Bianchi type $\mathrm{VI}_{-1/9}$ vacuum models, {\it https://doi.org/10.48550/arXiv.2410.10375} (2024). 

\bibitem{Liebscher}
S.~Liebscher, J.~Harterich, K.~Webster, and M.~Georgi.
\newblock Ancient dynamics in Bianchi models: Approach to Periodic Cycles.
\newblock
{\it Commun.\ Math.\ Phys.} \textbf{305}, 59-83, (2011).

\bibitem{lieetal12} S.~Liebscher, A.~D.~Rendall, and S.~B.~Tchapnda.
\newblock Oscillatory singularities in Bianchi models with magnetic fields.
\newblock {\it Ann. Henri Poincar\'e} {\bf 14}, 1043--1075, (2013).


\bibitem{mis69a} C.~W.~Misner.
\newblock Mixmaster universe.
\newblock {\it Phys.\ Rev.\ Lett.} {\bf 22}, 1071, (1969).



\bibitem{Miso11} Y.~Misonoh, K.~Maeda, and T.~Kobayashi.
\newblock Oscillating Bianchi IX universe in Hořava-Lifshitz gravity.
\newblock {\it Phys. Rev. D} {\bf 84}, 064030, (2011).

\bibitem{Ringstrom2} H.~Ringström.
\newblock Curvature blow up in Bianchi VIII and IX vacuum spacetimes.
\newblock {\it Class.\ Quant.\ Grav.} \textbf{17}, 713 (2000).

\bibitem{Ringstrom} H.~Ringström.
\newblock The Bianchi IX attractor.
\newblock {\it Annales\ Henri\ Poincaré} \textbf{2}, 405-500 (2001).

\bibitem{HL_status_report} T.~P.~Sotiriou.
\newblock Ho\v{r}ava-Lifshitz gravity: a status report.
\newblock{\it J. Phys. Conf. Ser.} {\bf 283}, 012034, (2011).

\bibitem{fR2} T. P.~Sotiriou and V.~Faraoni.
\newblock $f(R)$ theories of gravity.
\newblock {\it Rev.\ Mod.\ Phys.} {\bf 82}(1), 451-497, (2010).

\bibitem{Tak94} F. Takens.
\newblock Heteroclinic attractors: Time averages and moduli of topological conjugacy.
\newblock {\it Bull.\ Br.\ Math.\ Soc.} {\bf 25}, 107-120, (1994).

\bibitem{LRS}
C.~Uggla and H. Zur Muhlen.
\newblock Compactified and reduced dynamics for locally rotationally symmetric Bianchi type IX perfect fluid models.
\newblock {\it Class.\ Quant.\ Grav.} \textbf{7}, 1365, (1990).

\bibitem{ugg13a} C. Uggla.
\newblock Recent developments concerning generic spacelike singularities.
\newblock {\it Gen.\ Rel.\ Grav.} {\bf 45}, 1669, (2013).

\bibitem{ugg13b} C.~Uggla.
\newblock Spacetime Singularities: Recent Developments.
\newblock {\it Int.\ J.\ Mod.\ Phys.\ D} {\bf 22}, 1330002, (2013).

\end{thebibliography}

\end{document}